\renewcommand{\qed}{\hfill\mbox{\rule[0pt]{1.3ex}{1.3ex}}}
\newcommand{\NN}{\mathbb{N}}
\newcommand{\ZZ}{\mathbb{Z}}
\newcommand{\ZzZ}{\overline{\mathbb Z}}
\newenvironment{mtx*}{\begin{smallmatrix}}{\end{smallmatrix}}
\newcommand\justification[1]{\quad[\mbox{#1}]}
\def\ignoretrue{\global\@ignoretrue}
\def\defrelations{
   \def\better{\etc{\geq}}         % Next term without justification...
   \def\worse {\etc{\leq}}         %
   \def\equals{\etc{=}}            %
   \def\equivals{\etc{\equiv}}     %
   \def\lesseq{\etc{\leq}}         %
   \def\greateq{\etc{\geq}}        %
   \def\subseteqwithoutwhy{\etc{\subseteq}}
   \def\superseteq{\etc{\supseteq}}    %
   \def\inwithoutwhy{\etc{\in}}    %
   \def\iff{\etc{\mbox{iff}}}      % ... with the sign indicated by the name
                                   % As before but with justification:
   \def\betterwhy##1{\etcwhy{\geq}{##1}}
   \def\worsewhy##1{\etcwhy{\leq}{##1}}
   \def\equalswhy##1{\etcwhy{=}{##1}}
   \def\equivalswhy##1{\etcwhy{\equiv}{##1}}
   \def\lesseqwhy##1{\etcwhy{\leq}{##1}}
   \def\greateqwhy##1{\etcwhy{\geq}{##1}}
   \def\subseteqwhy##1{\etcwhy{\subseteq}{##1}}
   \def\supseteqwhy##1{\etcwhy{\supseteq}{##1}}
   \def\sqsubseteqwhy##1{\etcwhy{\sqsubseteq}{##1}}
   \def\sqsupseteqwhy##1{\etcwhy{\sqsupseteq}{##1}}
   \def\iffwhy##1{\etcwhy{\mbox{iff}}{##1}}
   \def\implieswhy##1{\etcwhy{\Rightarrow}{##1}}
   \def\inwhy##1{\etcwhy{\in}{##1}}
   \def\LHS{{\rm LHS}}              % Left hand side of an (in-)equation
   \def\RHS{{\rm RHS}}              % Right hand side ...
}
\def\termskip{1ex}             % additional vspace between terms
   \def\commentskip{0.5ex}        % additional vspace after `why'-comments
   \def\etc##1{\cr\noalign{\vspace{\termskip}}##1&}
   \def\etcwhy##1##2{\cr\noalign{\vspace{\termskip}}##1&\justification{##2}\cr\noalign{\vspace*{\commentskip}}&}
\def\tabsign{&}
   \def\signspace{\ \ }  % space to the left and right of a relational sign
   \def\etc##1{\tabsign\signspace ##1\signspace\tabsign\def\tabsign{}}
   \def\etcwhy##1##2
         \def\tabsign{}
   \def\linecalcendsequence{}
   \def\qed{\gdef\linecalcendsequence{\hfill\proofendsign}}
\newcommand{\us}{\cite{Gawlitza07PreciseFix,DBLP:conf/fm/GawlitzaS08,EasyChair:235}\xspace}
\begin{document}
\newtheorem{example}{Example}
\newtheorem{theorem}{Theorem}
\newtheorem{lemma}{Lemma}

\title{Parametric Strategy Iteration}
\titlerunning{Parametric Strategy Iteration}
\authorrunning{T.M. Gawlitza, M.D. Schwarz, H. Seidl}
% \title{Strategy Iteration for Solving Parametric Equations over Integers}

\author{Thomas M. Gawlitza\inst{1} \and Martin D. Schwarz\inst{2} \and Helmut Seidl\inst{2}}
\institute{
Carl von Ossietzky Universität Oldenburg, 
Ammerländer Heerstraße 114-118, D-26129 Oldenburg, Germany %TODO @Thomas: bitte pr\"ufen
\mailsb
\and
Technische Universit\"at M\"unchen, 
Boltzmannstra\ss e 3, D-85748 Garching, Germany
\mailsa
}

% \author{
  % \IEEEauthorblockN{Thomas M. Gawlitza}
  % \IEEEauthorblockA{Carl von Ossietzky Universität Oldenburg\\
    % Ammerländer Heerstraße 114-118 D-26129 Oldenburg, Germany\\
    % \mailsb
  % }
% \and 
  % \IEEEauthorblockN{Martin D. Schwarz and Helmut Seidl}
  % \IEEEauthorblockA{Technische Universit\"at M\"unchen\\
    % Boltzmannstra\ss e 3, D-85748 Garching, Germany\\
    % \mailsa
  % }
% }

\maketitle

% \vspace*{-4mm}

\begin{abstract}
Program behavior may depend on parameters, which are either configured
before compilation time, or provided at runtime, e.g., by sensors or other input devices.
Parametric program analysis explores how different parameter settings may affect the
program behavior.

In order to infer invariants depending on parameters, we introduce parametric strategy iteration.
This algorithm determines the precise least solution of systems of integer equations depending
on surplus parameters. Conceptually, our algorithm performs ordinary strategy iteration 
on the given integer system for all possible parameter settings in parallel.
This is made possible by means of region trees to represent the occurring piecewise affine functions.
We indicate that each required operation on these trees is polynomial-time if only constantly many 
parameters are involved.

Parametric strategy iteration for systems of integer equations
allows to construct parametric integer interval analysis
as well as parametric analysis of differences of integer variables.
It thus provides a general technique to realize precise parametric
program analysis if numerical properties of integer variables are of concern.
\end{abstract}

%-------------------------------------------------------------------------

\section{Introduction} \label{s:intro}

Since the very beginnings of linear programming, 
\emph{parametric versions} of linear programming (LP for short) 
have already been of interest
(see, e.g., \cite{feautrier88,gal97} % \cite{1997}
for an overview and \cite{holder11} %\cite{1999,2010} 
for recent algorithms).
Parametric LP can be applied to answer questions
such as: how much does the result of the analysis (optimal value/solution) depend on specific parameters?
What is the precise dependency between a parameter and the result?
In which regions of parameter values do these dependencies significantly change?
Such types of \emph{sensitivity} and \emph{mode} analyses are important in order to obtain a better
understanding of the problem under consideration and its potential 
analysis.
% a bit more
Sensitivity and mode questions equally apply to
% the analysis of numerical properties of 
programs whose behavior depends on parameters. 
Such parameters either could be provided at configuration time by engineers,
or at runtime, e.g., through sensor data or other kinds of input. 
The goal then is to determine how the output values produced by the program
% , e.g., by setting actuator variables 
% at particular program points 
may be influenced by the parameters.
% input or sensor data and whether the dependency significantly changes in different ranges
% of the input or sensor data.
The same software may, e.g., control the break of a truck or a car, 
but must behave quite differently in the two use cases.

% In the context of worst-case execution time (WCET), Reineke and Doerfert has proposed a system which infers
% how the WCET of a program may depend on structural parameters of the executing
% hardware \cite{Reineke14}. They infer a piecewise affine function representing this dependency by means
% of black box sampling of the WCETs obtained for different parameter settings.

Here, we consider the static analysis of parameterized systems and
propose methods for inferring, how numerical program invariants
may depend on the parameters of the program.
These questions cannot be answered by linear or integer linear programming related techniques alone,
since the constraints to be solved are necessarily non-convex.
Still, such questions %in principle  
can be answered for interval analysis or,
more generally, for template-based numerical analyses such as difference bound matrices
or octagons, since their analysis results can be expressed in \emph{first-order linear real arithmetic}.
This observation has been exploited by Monniaux \cite{monniaux09} %\cite{Monniaux2011} 
who applied %general 
\emph{quantifier elimination algorithms} to obtain parametric analysis results.
The resulting system can provide amazing results. 
However, 
% it does not seem to scale well to larger systems. 
if the programs under consideration have more \emph{complicated control-flow}, i.e., do not 
consist of a single program point only, 
% dedicated methods may be more appropriate than 
fixpoint computation realized by means of quantifier elimination does no longer seem appropriate. 

In this paper, we are concerned with invariants over integers (opposed to rationals in Monniaux'
system).
% On the other hand, algorithms for integer interval analysis are known which work amazingly well 
% \cite{Gawlitza07PreciseFix,DBLP:conf/fm/GawlitzaS08}.
% Our goal therefore is to provide dedicated methods for parametric program analysis.
%
\begin{example}\label{e:par}\rm
Consider the following parametric program:
% version of the program of Example \ref{e:simple}:
	\begin{align*}
	& x = {\bf p}_1;	\;
	{\bf while}\;(x<{\bf p}_2)	\;
	x = x+1;
	\end{align*}

\noindent
where ${\bf p}_1,{\bf p}_2$ are the parameters. 
The parametric invariant for program exit 
states that $x ={\bf p}_1$ holds, if ${\bf p}_2 \leq {\bf p}_1$,
and $x={\bf p}_2$ otherwise.
Thus, the analysis should distinguish two modes where the invariant inferred for 
program exit is significantly different, namely the set of parameter settings 
where ${\bf p}_2 \leq {\bf p}_1$ holds and its complement.
In the first mode, the value of $x$ at program exit is only sensitive to changes of the
parameter ${\bf p}_1$, while otherwise it is sensitive to changes of the
parameter ${\bf p}_2$.
\qed
\end{example}

\noindent
Note that for \emph{integer linear arithmetic}, quantifier elimination
is even more intricate than over the rationals.
% In order to infer such parametric invariants, 
% we therefore generalize our approach for integer
% equation systems to \emph{parametric integer equation systems}.
%
As shown in \cite{Gawlitza07PreciseFix,DBLP:conf/fm/GawlitzaS08,EasyChair:235}, non-parametric
interval analysis as well as the analysis of differences of integer variables can be compiled into 
suitable integer equations. 
In absence of parameters, integer equation systems where no integer multiplication or division is involved,
can be solved without resorting to heavy machinery such as \emph{integer linear programming}. 
Instead, an iteration over \emph{max}-strategies suffices \us. Here, a \emph{max}-strategy maps each application of a maximum operator to one of its arguments.
Once a choice is made at each occurrence of a maximum operator, a conceptually simpler system is obtained.
For systems without  maximum operators, the \emph{greatest} solution can be determined by means of 
a generalization of the \emph{Bellman-Ford} algorithm. 
% This algorithm is known from graph theory, where it is used for computing 
% shortest paths in directed graphs with positive and negative edge weights.
%
The greatest solutions of the maximum-free systems
encountered during the iteration on \emph{max}-strategies, provide us with
an increasing sequence of \emph{lower} approximations to the overall least solution of the integer system.
Given such a lower approximation, we can check whether a solution and thus the least solution
has already been reached. Otherwise, the given \emph{max}-strategy is improved, and the iteration proceeds.

In contrast to ordinary program analysis,
\emph{parametric program analysis} infers a distinct program invariant 
% (such as the one of Example \ref{e:simple}) 
for each possible parameter setting.
To solve these parametric systems,
we propose to apply strategy iteration \emph{simultaneously} for all parameter settings 
(Section \ref{s:psi}).
We show that this algorithm terminates---given that we can effectively deal
with the parametric intermediate results considered by the algorithm.
For that, we show that
the intermediate parametric values can be represented 
% can be represented 
by \emph{region trees} (Section \ref{s:region}).
Here, a region tree over a finite set $C$ of linear inequalities is a data-structure
for representing finite partitions of the parameter space into non-empty regions of parameter settings which
are indistinguishable by means of the constraints in $C$. 
% A region tree then assigns a value (of a set $V$ of values) to each non-empty region.
A value (of a set $V$ of values) is then assigned to each non-empty region of the tree. %MARTIN
We also indicate that
% show that, based on this data-structure,
each basic operation which is required for implementing parametric strategy iteration 
can be realized with region trees in polynomial time --- assuming that the number of parameters is fixed
(Section \ref{s:implementation}).
Finally, we apply parametric strategy iteration for parametric integer equations
to solve parametric interval equations and thus to perform
% sensitivity and mode analysis 
parametric program analysis
for integer variables of programs, and report preliminary experimental results 
(Section \ref{s:experiments}).

\section{Basic Concepts}\label{s:basics}

In this section we provide basic notions % about the complete lattice of the integers
and introduce systems of \emph{parametric integer equations}.
By $\ZzZ$ we denote the complete linearly ordered set $\ZZ \cup \{\neginfty,\infty\}$.
%
%In the following, we consider the complete lattice $\ZzZ$ of integers equipped with the
%natural ordering and extended with $\neginfty$ and $\infty$ as least and greatest
%elements:
%\begin{align*}
%\neginfty <\ldots < -2 < -1 < 0 < 1 < 2 < \ldots < \infty
%\end{align*}
%
% \noindent
Let ${\bf P}$ and ${\bf X}$ denote finite sets of \emph{parameters} and \emph{variables} 
(\emph{unknowns}), which are disjoint.
A system $\E$ of \emph{parametric} integer equations is given by
% a set
	\begin{align*}
	{\bf x} = e_{\bf x}\;,\qquad{\bf x}\in{\bf X}
	\end{align*}
where each right-hand side $e_{\bf x}$ is of the form $e_1\vee\ldots\vee e_r$
for parametric integer expressions $e_1,\ldots,e_r$. 
Here, ``$\vee$'' denotes the maximum operator.
In this paper, an integer expression is built up from constants
in $\ZzZ$, variables, parameters, and negated parameters by means of application of operators.
As operators, we consider ``$\wedge$'' (minimum), ``$+$'' (addition),
``$;$'' (test of non-negativity) and multiplication with non-negative scalars.
Addition as well as scalar multiplication is extended to $\neginfty$ and $\infty$ by:
\begin{align*}
x + \neginfty &= \neginfty + x = \neginfty	&&\forall x\in\ZzZ	\\
x + \infty    &= \infty + x 	= \infty	&&\forall x\in\ZzZ\backslash\{\neginfty\}\\
c\cdot(\neginfty) &= \neginfty		&&\forall c\geq 0	\\
0\cdot\infty	&= 0			&&\\
c\cdot\infty	&= \infty			&&\forall c> 0	
\end{align*}
A parametric integer expression is defined by means of the following grammar:
\begin{align*}
e\;{::=}\; a\mid{\bf p}\mid-{\bf p}\mid{\bf x}\mid e_1\wedge e_2\mid e_1 + e_2\mid e_1\,;\;e_2\mid c\cdot e_1
\end{align*}

\noindent
where $a\in\ZzZ$, $c\in\mathbb N$, ${\bf p}\in{\bf P}$, ${\bf x}\in{\bf X}$.
% For the sake of this paper, we have omitted full multiplication.
% To do!
Given a \emph{parameter setting} $\pi:{\bf P}\to{\mathbb Z}$ and
a \emph{variable assignment} $\xi : {\bf X}\to\ZzZ$, the value of an expression $e$ is determined by:
\begin{align*}
\sem{a}_\pi\,\xi &= a	  &
\sem{{\bf x}}_\pi\,\xi &= \xi({\bf x})	\\
\sem{{\bf p}}_\pi\,\xi &= \pi({\bf p})	&
\sem{e_1\,\Box\,e_2}_\pi\,\xi &= \sem{e_1}_\pi\,\xi\;\Box\;\sem{e_2}_\pi\,\xi	\\
\sem{-{\bf p}}_\pi\,\xi &= -\pi({\bf p})	&
\sem{c\cdot e}_\pi\,\xi &= c\cdot\sem{e}_\pi\,\xi
\end{align*}
Here, $\Box \in \{\wedge, +\}$.
For a given parameter setting $\pi$, 
$\E_\pi$ denotes the (non-parametric) integer equation system obtained from $\E$ by 
replacing every parameter $\bf p$ of $\E$ with its value $\pi({\bf p})$.
For a given parameter setting $\pi$, a \emph{solution} to $\E_\pi$ 
is a variable assignment $\xi^*$ that satisfies all equations of $\E_\pi$.
That is, for each equation ${\bf x} = e_1\vee\ldots\vee e_r$ in $\E$,
% $\xi^*({\bf x})$ equals the maximum of the values $\sem{e_i}_\pi\,\xi^*$.
% Formally this means that:
\begin{align*}
  \xi^*({\bf x}) = \sem{e_1}_\pi\,\xi^*\vee\ldots\vee\sem{e_r}_\pi\,\xi^*
\end{align*}
Since all operators occurring in right hand sides are monotonic, for every 
parameter setting $\pi$, $\E_\pi$ has a \emph{uniquely determined least solution}.
Finally, a \emph{parametric} solution of $\E$ is a mapping $\Xi$ which assigns to each possible parameter setting
$\pi$, a solution of $\E_\pi$.
$\Xi$ is the \emph{parametric least solution} of $\E$ iff
$\Xi(\pi)$ is the least solution of $\E_\pi$ for every parameter setting $\pi$.
%where $\Sigma$ is a least parametric solution if 
%for every $\pi$, $\Sigma\,\pi$
%is a least solution of $\E$ for $\pi$.

\begin{example}\rm
  Consider the parametric system $\E$ which consists of the single equation $\vx = \mathbf p_1 \vee (\vx + 1 \wedge \mathbf p_2)$. Then the parametric least solution $\Xi$ of $\E$ is given by
  \begin{align*}
    \Xi \; \pi \; \mathbf x &= 
      \begin{cases} 
        \pi(\mathbf p_1) & \text{if } \pi(\mathbf p_1) \geq \pi(\mathbf p_2) \\
        \pi(\mathbf p_2) & \text{if } \pi(\mathbf p_1) < \pi(\mathbf p_2) \\
      \end{cases}
  \end{align*}
    for all parameter settings $\pi$.
  \qed
\end{example}

\section{Parametric Strategy Iteration}\label{s:psi}

In the following we w.l.o.g.\ assume that the set of parameters ${\bf P}$ is given by 
${\bf P} = \{{\bf p}_1,\ldots,{\bf p}_k\}$. 
Accordingly, parameter settings from ${\bf P}\to\ZZ$
can be represented as vectors from $\ZZ^k$.
Our goal is to enhance the strategy iteration algorithm from \us 
to an algorithm 
that computes \emph{parametric least solutions} of systems of parametric integer equations.
Conceptually, we do this by performing each operation for all parameter settings \emph{in parallel}.
For that, we lift the complete lattice $\ZzZ$ of integer values to the set $\ZZ^k\to\ZzZ$
of \emph{parametric} values which is again a complete lattice w.r.t.\
the point-wise extension of the ordering on $\ZzZ$
where the least and greatest elements are given by the functions ${\sf const}_{\neginfty}$ and
${\sf const}_\infty$ mapping each vector of parameters to the constant values $\neginfty$ and $\infty$, respectively.
Accordingly, sub-expressions of right-hand sides are no longer evaluated one by one for each parameter setting.
Instead, each binary operator $\Box$ on $\ZzZ$
is lifted to a binary operator $\Box^*$ on parametric values from $\ZZ^k \to\ZzZ$ 
by defining:
\begin{align*}
  (\phi_1\;\Box^*\;\phi_2)(\pi) &= \phi_1(\pi)\,\Box\,\phi_2(\pi) 
%  && \text{for all parameter settings $\pi \in \Z^k$}
\end{align*}
\noindent
for all parameter settings $\pi \in \Z^k$.
In particular, the lifted maximum operator $\vee^*$ 
equals the least upper bound of the complete lattice $\ZZ^k \to\ZzZ$.
Likewise, scalar multiplication with a non-negative constant $c$ is lifted point-wise from a unary
operator on $\ZzZ$ to a unary operator on $\ZZ^k \to\ZzZ$.
For convenience, we henceforth denote the lifted operators with the same symbols by which we denote the
original operators.

The original system $\E$ of parametric equations over $\ZzZ$ thus can be interpreted as
a system of equations over the domain $\ZZ^k\to\ZzZ$ of \emph{parametric values}. 
For all parametric variable assignments $\rho : {\bf X} \to \Z^k \to \ZzZ$,
expressions $e$ are interpreted as follows:
\begin{align*}
\sem{{\bf p}_i}\;\rho &= {\sf proj}_i	&
\sem{{\bf -p}_i}\;\rho &= {\sf -proj}_i	\\
\sem{a}\;\rho  &= {\sf const}_a   &
\sem{{\bf x}}\;\rho &= \rho({\bf x})  \\
\sem{e_1\,\Box\,e_2}\;\rho &= \sem{e_1}\,\rho\;\Box\;\sem{e_2}\,\rho  &
\sem{c\cdot e}\;\rho &= c\cdot\sem{e}\;\rho    
% \sem{e_1\,\vee\,e_2}\;\rho &=& \{\underline p\mapsto \sem{e_{i_{\underline p}}}_\sigma\,\rho\,(\underline p)
                % \mid\underline p\in\ZZ^k\}    \\
\end{align*}

\noindent
Here, $\Box$ is a binary operator,
${\sf const}_a$ is a parametric value which maps all arguments to the constant $a$,
and ${\sf proj}_i$ denotes the projection onto the $i$th component of its argument vector.

With respect to this interpretation the least solution $\rho^*$ of $\E$ is a mapping of type
${\bf X}\to\ZZ^k\to\ZzZ$. Let us call $\rho^*$ the \emph{least parametric solution}.
Let $\Xi$ denote the parametric least solution as defined in the last section. Then $\Xi$ and $\rho^*$
are not identical --- but in one-to-one correspondence. By fixpoint induction, it can be verified that:
\[
\Xi\;\pi\;{\bf x} = \rho^*{\bf x}\;\pi 
\]
for all variables ${\bf x}\in{\bf X}$, and all parameter settings $\pi\in\ZZ^k$.
In the same way as the abstract domain ${\ZzZ}$ and the operators on ${\ZzZ}$,
we also lift the notion of a strategy from \cite{DBLP:conf/fm/GawlitzaS08}  to the notion of a 
\emph{parametric} strategy.
For technical reasons, let us assume that the right-hand side for each variable in $\E$
is of the form $a\vee e_1\vee\ldots e_r$ where $a\in\ZzZ$. This can always be achieved, e.g., 
by replacing right-hand sides $e$ which are not of the right format with $\neginfty\vee e$.
A parametric strategy $\sigma$ then assigns to each variable ${\bf x}$, a \emph{parametric choice}.
If the right-hand side for ${\bf x}$ is given by $e_0\vee\ldots\vee e_r$,
$\sigma\,{\bf x}$ maps each parameter setting $\pi\in\ZZ^k$ 
to a natural number in the range $[0,r]$ (signifying one of the argument expressions $e_i$).

Moreover, we need an operator ``${\sf next}$'' which takes a given \emph{parametric strategy} $\sigma$ together with
a \emph{parametric variable assignment} $\rho : {\bf X}\to\ZZ^k\to\ZzZ$ and then, 
for every parameter setting $\pi$, 
switches the choice provided by $\sigma$ whenever required by the evaluation of subexpressions according to 
$\rho$. 
That is, $\sigma' = {\sf next}(\sigma,\rho)$ implies
that the following properties hold for all equations
${\bf x} =e_0\vee\ldots\vee e_r$ and all parameter settings $\pi$:
\begin{enumerate}
\item	$\sem{e_{\sigma'\,{\bf x}\,\pi}}\,\rho\,\pi \geq \sem{e_{i}}\,\rho\,\pi$ for all $i \in \{0,\ldots,r\}$.
\item	If $\sem{e_{\sigma\,{\bf x}\,\pi}}\,\rho\,\pi \geq \sem{e_{i}}\,\rho\,\pi$ for all $i$,
	        then $\sigma'\,{\bf x}\,\pi = \sigma\,{\bf x}\,\pi$.
\end{enumerate}
Note that this operator changes the choice given by the argument strategy $\sigma$ only if a real improvement
is guaranteed.
An operator ``${\sf next}$'' with properties 1) and 2) is a \emph{locally optimal} parametric strategy improvement operator
%
% It is locally optimal at $\rho$ 
because it chooses for each $\vx$ a best alternative everywhere (relative to $\rho$).
For the correctness of the algorithm it would be sufficient to choose some strategy that is an improvement compared 
to the current strategy at $\rho$.
Finally, we need an operator ``${\sf select}$'' which, based on a parametric choice 
$\phi:\ZZ^k\to\NN$, selects one of the
arguments, i.e., ${\sf select}\;\phi\;(v_0,\ldots,v_r)$ is the parametric value given by:
\begin{align*}
{\sf select}\;\phi\;(v_0,\ldots,v_r)\;\pi &= v_{\phi(\pi)}\,\pi
\end{align*}
\noindent
for all parameter settings $\pi$.
%\quad\text{whenever}\quad \phi\,\pi = i

\noindent
With these parametric versions of the corresponding operators used by strategy iteration,
we propose the algorithm in Fig.\ \ref{f:int} for systems of parametric integer equations
with $n$ unknowns.
The resulting algorithm is called \emph{parametric strategy iteration}
or PSI for short.

\begin{figure*}[hbt]

% \vspace*{-3mm}
\centering
% \fbox{$
        % \begin{array}{l}
        % \sigma = \{{\bf x}\mapsto 0\mid {\bf x}\in{\bf X}\}; \\
        % {\bf do}\;\{    \\
        % \quad {\bf forall}\;({\bf x}\in{\bf X})\;\;\rho({\bf x}) = \infty;   \\
        % (greatest fixpoint computation)}  \\
        % \quad {\bf for}\;({\bf int}\;i=0; i < n; i{+}{+})       \\
        % \quad\quad {\bf forall}\;(({\bf x}=e_0\vee\ldots\vee e_r)\in{\cal E})   \\
        % \quad\quad\quad\rho({\bf x}) = \sem{e_{\sigma\,{\bf x}}}\,\rho;    \\
        % \quad{\it old}=\sigma;  \\
        % \quad\sigma = {\sf next}(\sigma,\rho);       \\
        % \}\;{\bf while}\;({\it old}\neq\sigma);      \\
        % {\sf output}(\rho);
        % \end{array}
        % $}
% \quad
\fbox{$
	\begin{array}{lll}
	% {\cal E}' = \{ {\bf x} = \neginfty\vee e\mid ({\bf x} = e) \in {\cal E}\};	\\
	\sigma = \{{\bf x}\mapsto{\sf const}_0\mid {\bf x}\in{\bf X}\};&/\!/&\textrm{initial strategy} \\
	{\bf do}\;\{	\\
	\quad {\bf forall}\;({\bf x}\in{\bf X})\;\;\rho({\bf x}) = {\sf const}_\infty; &/\!/&\textrm{begin BF}	\\
	% (greatest fixpoint computation)}	\\
	\quad {\bf for}\;({\bf int}\;i=0; i < n; i{+}{+})	\\
	\quad\quad {\bf forall}\;(({\bf x}=e_0\vee\ldots\vee e_r)\in{\cal E})	\\
	\quad\quad\quad\rho({\bf x}) = {\sf select}\,(\sigma\,{\bf x})\,(\sem{e_0}\,\rho,\ldots,\sem{e_r}\,\rho);	&/\!/&\textrm{end BF} \\
	\quad{\it old}=\sigma;	\\
	\quad\sigma = {\sf next}(\sigma,\rho);		&/\!/&\textrm{strategy improvement}\\ 
	\}\;{\bf while}\;({\it old}\neq\sigma);		&/\!/&\textrm{termination detection}\\
	{\sf output}(\rho);
	\end{array}
	$}
\caption{\label{f:int}\label{f:psi}Parametric strategy iteration for a parametric integer equation system with $n$ unknowns.}

%\vspace*{-4mm}
\end{figure*}

\begin{figure*}[hbt]

% \vspace*{-3mm}
\centering
\fbox{$
        \begin{array}{lll}
        \sigma = \{{\bf x}\mapsto 0\mid {\bf x}\in{\bf X}\};&\qquad\qquad/\!/&\textrm{initial strategy} \\
        {\bf do}\;\{    \\
        \quad {\bf forall}\;({\bf x}\in{\bf X})\;\;\rho({\bf x}) = \infty; &\qquad\qquad/\!/&\textrm{begin BF}	\\
	% (greatest fixpoint computation)}  \\
        \quad {\bf for}\;({\bf int}\;i=0; i < n; i{+}{+})       \\
        \quad\quad {\bf forall}\;(({\bf x}=e_0\vee\ldots\vee e_r)\in{\cal E})   \\
        \quad\quad\quad\rho({\bf x}) = \sem{e_{\sigma\,{\bf x}}}\,\rho;       &\qquad\qquad/\!/&\textrm{end BF} \\
        \quad{\it old}=\sigma;  \\
        \quad\sigma = {\sf next}(\sigma,\rho);          &\qquad\qquad/\!/&\textrm{strategy improvement}\\ 
        \}\;{\bf while}\;({\it old}\neq\sigma);         &\qquad\qquad/\!/&\textrm{termination detection}\\
        {\sf output}(\rho);
        \end{array}
        $}
\caption{\label{f:int0}\label{f:psi0}Ordinary strategy iteration for a non-parametric integer equation system with $n$ unknowns.}

%\vspace*{-4mm}
\end{figure*}
\noindent

PSI starts with the initial parametric strategy $\sigma$ mapping each variable and 
parameter setting to the constant $0$,
i.e., it selects for each variable and parameter setting the constant term on the right-hand side.
For a given parametric strategy, the Bellman-Ford algorithm is used to determine the \emph{greatest} solution
(the \emph{for}-loop labeled as BF). This Bellman-Ford iteration amounts to $n$ rounds of round robin iteration 
($n$ the number of variables) starting from the top element of the lattice.
During round robin iteration, the appropriate integer expression $e_i$ as right-hand side for each variable ${\bf x}$ 
and each parameter setting $\pi$, is selected by means of the auxiliary function ${\sf select}$ according to
the current parametric strategy $\sigma$.
As a result of Bellman-Ford iteration for all parameter settings in parallel, 
the next approximation $\rho$ to the least parametric fixpoint is obtained.
This parametric variable assignment then is used to \emph{improve} the current parametric strategy $\sigma$
by means of the operator ``${\sf next}$''. This is repeated until the parametric strategy does not change any more.

For a comparison, Fig.\ \ref{f:int0} shows a version of \emph{the non-parametric} strategy iteration as presented in
\cite{DBLP:conf/fm/GawlitzaS08}.
The mappings $\rho,\sigma$ there have functionalities:
\[
\rho: {\bf X}\to\ZzZ\qquad\sigma:{\bf X}\to\NN
\]
where the evaluation $\sem{e_i}$ of expressions $e_i$ results in integer values only.
Since the strategy $\sigma$ specifies a single integer expression $e_i$ for any given
variable $\bf x$, the call to ``${\sf select}$'' in PSI can be simplified to $\sem{e_{\sigma\,{\bf x}}}\,\rho$.
% and the auxiliary functions ${\sf select}$ and ${\sf next}$ operate on ordinary strategies,
%i.e., mappings from $X\to\NN$. Remark that in absence of parameters, the application of the 
%\emph{select} function can be optimized in so far as during round robin iteration, evaluation is 
%only required of that expression $e_i$ which is selected by the current strategy $\sigma$.
%Accordingly, the call ${\sf select}(\sigma\,{\bf x})\,(\sem{e_0}\,\rho,\ldots,\sem{e_r}\,\rho)$ 
%is replaced with $\sem{e_{\sigma\,x}}\,\rho$.
This optimization is not possible in the parametric case, since different parameter values
may result in different $e_i$ to be selected. 
For systems of integer equalities without parameters
strategy iteration computes the least solution as has been shown in \cite{DBLP:conf/fm/GawlitzaS08}.

Assume for a moment that we can compute with parametric values and parametric strategies effectively,
i.e., can represent them in some data structure, test them for equality,
compute the results of parametric operator applications, as well as realize the operations ``${\sf select}$'' and
``${\sf next}$''. Then the algorithmic scheme from Fig.~\ref{f:int} can be implemented, and we obtain:

\begin{theorem}
Let $\E$ be a parametric integer equation system with $n$ variables
where each right-hand side is a maximum of at most $r$ non-constant integer expressions.
The following holds:
\begin{enumerate}
\item
Parameterized strategy iteration as given by Fig.\ \ref{f:int} terminates after 
at most $(r+1)^n$ strategy improvement steps where each round of improvement requires at most 
${\cal O}(n\cdot|{\cal E}|)$ evaluations of parametric operators.
\item
On termination, the algorithm returns the least parametric solution of $\cal E$.
\end{enumerate}
\end{theorem}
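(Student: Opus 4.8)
The plan is to reduce both claims to the known properties of \emph{ordinary} strategy iteration (Fig.~\ref{f:int0}) from \us, by showing that PSI simulates, for every fixed parameter setting $\pi\in\ZZ^k$, the non-parametric run on $\E_\pi$, and that all these simulated runs proceed in lock step.

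First I would establish the technical core, a \emph{pointwise simulation lemma}. Using the definitions of the lifted operators and of ${\sf select}$ and ${\sf next}$ (whose defining properties 1) and 2) are already stated separately for each $\pi$), one checks that projection to a parameter setting $\pi$ commutes with every operation in the loop body: $\sem{e}\,\rho\,\pi=\sem{e}_\pi(\lambda{\bf x}.\,\rho({\bf x})\,\pi)$ for any parametric assignment $\rho$; one ${\sf select}$-guided round-robin sweep of the Bellman--Ford phase, read at $\pi$, is exactly one round-robin sweep of Fig.~\ref{f:int0} on $\E_\pi$ under the strategy $\lambda{\bf x}.\,\sigma({\bf x})\,\pi$; and ${\sf next}(\sigma,\rho)({\bf x})\,\pi$ is a legal non-parametric improvement at $\pi$ for $\E_\pi$ and the projected assignment. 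Since the initial parametric strategy ${\sf const}_0$ projects to the initial strategy of Fig.~\ref{f:int0}, induction on the number $t$ of executions of the {\bf do}-body shows that $\rho({\bf x})\,\pi$ and $\sigma({\bf x})\,\pi$ after $t$ iterations agree with the values produced by $t$ iterations of ordinary strategy iteration on $\E_\pi$. In particular, by the correctness of the $n$-round Bellman--Ford phase proved in \us, $\rho({\bf x})\,\pi$ equals the greatest solution of the maximum-free system determined by the current $\pi$-strategy, and it is left unchanged at $\pi$ whenever that strategy is.

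Given the lemma, Claim~1 follows: the {\bf while}-condition fails exactly when the parametric strategy is stable for \emph{all} $\pi$, i.e.\ when every simulated non-parametric run has reached its final strategy. By \us, ordinary strategy iteration on an integer system with $n$ unknowns and at most $r$ non-constant alternatives per right-hand side stabilises after at most $(r+1)^n$ improvement steps, hence PSI stops after at most $\max_{\pi\in\ZZ^k}\bigl(\text{length of the ordinary run on }\E_\pi\bigr)\le(r+1)^n$ steps. The per-round bound is read off Fig.~\ref{f:int}: the Bellman--Ford phase performs $n$ round-robin sweeps, each evaluating every right-hand side once (one ${\sf select}$ together with the subexpressions $\sem{e_i}\,\rho$), and the concluding ${\sf next}$ and the equality test on strategies are of the same order, giving ${\cal O}(n\cdot|\E|)$ parametric operator evaluations. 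For Claim~2, when PSI halts every simulated run on $\E_\pi$ has stabilised, so by the correctness part of \us the projected assignment $\lambda{\bf x}.\,\rho({\bf x})\,\pi$ is the least solution of $\E_\pi$; since this holds for all $\pi$, the returned $\rho$ is the least parametric solution $\rho^*$, which by the correspondence $\Xi\,\pi\,{\bf x}=\rho^*{\bf x}\,\pi$ from Section~\ref{s:psi} represents the parametric least solution of $\E$.

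The step I expect to be the main obstacle is establishing the simulation lemma cleanly, rather than the two claims themselves, which given the lemma reduce to direct appeals to \us. One must verify that the parametric Bellman--Ford sweep really computes the greatest solution of the $\pi$-strategy system simultaneously at \emph{every} $\pi$, and that the simulated runs stay synchronised, i.e.\ the $t$-th non-parametric step for a parameter setting $\pi$ is carried out during the $t$-th PSI iteration even after $\pi$'s strategy has already stabilised (the sweep is simply re-executed and returns the same value), so that no parameter setting ``lags behind'' and the global run cannot be longer than the longest individual $\E_\pi$-run.
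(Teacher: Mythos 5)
Your proposal is correct and follows essentially the same route as the paper: a pointwise simulation argument showing that, projected to any fixed parameter setting $\pi$, PSI behaves exactly like ordinary strategy iteration on $\E_\pi$, from which both the $(r+1)^n$ bound (as the maximum over $\pi$ of the individual run lengths) and the correctness on termination follow by appeal to \us. Your treatment is in fact somewhat more careful than the paper's, in particular regarding the synchronisation of the simulated runs once some parameter settings have already stabilised.
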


\begin{proof}
First we observe that, when probing the intermediate values of $\sigma$ and $\rho$ for any given parameter setting 
$\pi= (p_1,\ldots,p_k)\in\ZZ^k$, the algorithm
from Fig.\ \ref{f:int} for the parametric system $\E$ 
returns the same values and strategic choices as the algorithm from Fig.\ \ref{f:int0} when run on the integer system $\E_\pi$.
%
% behaves like the strategy iteration algorithm from \cite{DBLP:conf/fm/GawlitzaS08} when run on
% the non-parametric integer equation system ${\cal E}_\pi$,
Moreover upon termination, the strategic choices $\sigma\,{\bf x}\,\pi$ as well as
the values $\rho\,{\bf x}\,\pi$ do no longer change, and therefore for each parameter setting $\pi$, a least solution
of $\E_\pi$ has been attained.
Since the maximal number of strategies considered by strategy iteration for ordinary integer systems
is bounded by $(r+1)^n$ (independently of the values of the constants in the system), 
we conclude that the parametric algorithm also performs at most $(r+1)^n$ strategy improvement steps.
Therefore, the algorithm terminates.
Since then for each parameter setting $\pi$, a least fixpoint of $\E_\pi$ has been obtained, the resulting
assignment is equal to the least parametric solution of $\cal E$.
% \qed
\end{proof}
	
\noindent
Being able to effectively compute with parametric variable assignments and 
parametric strategies is crucial for the implementation
of parametric strategy iteration as a practical algorithm.
% It turns out that the parametric variable assignments and parametric strategies occurring 
% during the algorithm are not arbitrary. 
In the following we will explore
the structure of parametric variable assignments and parametric strategies occurring 
during the algorithm.

A set $S\subseteq\ZZ^k$ of integer points is called \emph{convex} iff 
$S$ equals the set of integer points inside
its convex hull over the rationals.
A mapping $f : \ZZ^k\to V$ ($V$ some set) is called \emph{piecewise constant}
iff there is a finite
partition $\Psi$ of $\ZZ^k$ into nonempty convex sets together with a mapping $\Psi_f : \Psi \to V$ such that
$f(\pi) = \Psi_f(P)$ for all $P \in \Psi$ and all $\pi \in P$.
The cardinality of the partition $\Psi$ is called the \emph{fragmentation} of $f$.
In fact, the fragmentation of $f$ depends on the representation of $f$ rather than the function $f$ itself.
Still, we intentionally do not differentiate between the function and its representation here.
Let ${\sf Aff}_k\subseteq\ZZ^k \to\ZzZ$ denote the set of functions which are
either ${\sf const}_{\neginfty}$, ${\sf const}_\infty$ or an affine function from $\ZZ^k\to\ZZ$.
We call $f \in \ZZ^k \to\ZzZ$ \emph{piecewise affine} iff
there is a piecewise constant mapping 
$\tilde f : \ZZ^k\to{\sf Aff}_k$ such that $f(\pi) = \tilde f (\pi)(\pi)$.
If $f$ is piecewise affine, then there is a finite
partition $\Psi$ of $\ZZ^k$ into non-empty convex sets together with a mapping $\Psi_f:\Psi\to{\sf Aff}_k$ such that
$f(\pi) = \Psi_f(P)(\pi)$ for all $P \in \Psi$ and $\pi \in P$.

Assume $f_1,f_2$ are piecewise affine where both mappings share the same partition $\Psi$ of the
parameter space $\ZZ^k$ into convex sets. 
Then the functions $c\cdot f_1$ as well as $f_1+f_2$
are piecewise affine using the same partition $\Psi$. 
The functions $f_1 \vee f_2$, $f_1 \wedge f_2$, and $f_1;f_2$
are piecewise affine as well with, however, a possibly different finite partition into
convex sets. 
The fragmentation is increased at most by a factor 2.
From that, we conclude that the inner \emph{for}-loop which implements the BF iteration,
may increase the fragmentation of a common partition of values and the parametric strategy $\sigma$
only by a factor $2^{nm_\wedge}$, 
where $m_\wedge$ is the number of occurrences of minimum operators $\wedge$ in $\cal E$.
The applications of the operator ``$;$'' do not contribute, since, in this phase, they never evaluate to a 
parametric value which returns $\neginfty$.
Assume that $\rho$ is the parametric variable assignment computed for 
the parametric strategy $\sigma$ after executing the inner for-loop.
% Within each component of the resulting partition, each right-hand side $e_0\vee\ldots\vee e_{r}$ 
% may result in a subdivision into at most $2^{r}$ components
% where the next strategy might differ. 
The parametric strategy $\sigma'$ returned by the call to the
${\sf next}$-function for $\sigma$ and $\rho$ will again be a piecewise constant function
whose fragmentation compared with the fragmentation of $\rho$ is increased at most by a factor of 
$2^{m_\vee + m_\wedge + m_;}$,
where $m_\vee$ and $m_;$ denote the number of occurrences of $\vee$-operators and $;$-operators,
respectively.
This holds because we basically have to evaluate right-hand sides in order to 
apply the function $\sf next$ to $\sigma$ and $\rho$.
Therefore, 
compared with the fragmentation of $\sigma$,
the fragmentation of $\sigma'$ increased by a factor of at most 
$2^{m_\vee + m_\wedge + m_;} \cdot 2^{n m_\wedge} = 2^{m_\vee + (n + 1) m_\wedge + m_;}$.
Since the initial strategy has fragmentation 1 and the total number of strategy improvement steps is
bounded, we obtain our second theorem:

\begin{theorem}\label{t:fragmentation}
Consider the parametric strategy improvement algorithm from Fig.\ \ref{f:int}.
All encountered parametric strategies are piecewise constant.
Likewise, all encountered variable assignments are piecewise affine.
Additionally:
\begin{enumerate}
\item	The fragmentation is bounded by $2^{d \cdot (m_\vee + (n + 1) m_\wedge + m_;)}$,
where 
$n$ is the number of unknowns in the integer system of equations,
$m_\Box$ is the number of $\Box$-operators, where $\Box \in \{\vee,\wedge,;\}$, and
$d$ is the maximal number of strategies for any parameter setting. % in $\ZZ^k$.
\item	The absolute value of any occurring number 
	% occurring in an affine function of a parametric value 
	is bounded by
	$(c\vee 2)^{s n} \cdot a$ where 
	$a$ is the maximum of the absolute values of all constants, 
	$c$ is the maximal occurring constant in a scalar multiplication and 
	$s$ is the maximal size of a right-hand side.
\end{enumerate}
\end{theorem}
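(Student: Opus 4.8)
The plan is to prove both bounds by tracking, step by step, how the two relevant complexity measures — fragmentation of the piecewise-affine/piecewise-constant data, and the absolute value of the integers occurring in it — evolve over one full iteration of the outer loop, and then to multiply the per-iteration growth by the already-established bound $d$ on the number of strategy improvement steps. The claims that all encountered strategies are piecewise constant and all encountered variable assignments are piecewise affine follow by induction on the iterations: the initial strategy ${\sf const}_0$ has fragmentation $1$ and is trivially piecewise constant, the initial $\rho = {\sf const}_\infty$ is piecewise affine, and every operation performed — scalar multiplication, $+$, $\vee$, $\wedge$, $;$, and the application of ${\sf select}$ and ${\sf next}$ — preserves these classes by the closure facts already recorded in the excerpt (scalar multiplication and $+$ keep the partition; $\vee$, $\wedge$, $;$ refine it by a factor at most $2$ each).

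For part (1), I would combine the two growth estimates already derived in the text. Starting from a common partition underlying $\sigma$ and $\rho$ at the top of an iteration, the inner BF \emph{for}-loop runs $n$ rounds of round-robin, so its applications of $\wedge$ (the only partition-refining operator active in that phase, since, as noted, the tests $;$ never produce $\neginfty$ there) multiply the fragmentation by at most $2^{n m_\wedge}$. The subsequent call to ${\sf next}$ requires one more evaluation of all right-hand sides together with the argmax comparisons, multiplying the fragmentation by at most $2^{m_\vee + m_\wedge + m_;}$. Hence one outer iteration multiplies the fragmentation by at most $2^{m_\vee + (n+1)m_\wedge + m_;}$. Since the initial fragmentation is $1$ and there are at most $d$ improvement steps, the fragmentation after termination is bounded by $\left(2^{m_\vee + (n+1)m_\wedge + m_;}\right)^{d} = 2^{d\cdot(m_\vee + (n+1)m_\wedge + m_;)}$, which is the claimed bound.

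For part (2), the key observation is that the piecewise-affine values computed during BF have, on each region $P$, the form $\Psi_f(P)(\pi)$ with $\Psi_f(P)$ an affine function; one bounds the absolute values of \emph{all} its coefficients (constant term and the $k$ parameter coefficients) rather than the values themselves. A single BF round evaluates each right-hand side once: addition adds coefficients, $\wedge$/$\vee$/$;$ pick one of two affine pieces without enlarging coefficients, and scalar multiplication by $c$ multiplies coefficients by $c$. Thus the coefficient magnitude is multiplied by at most $(c\vee 2)$ per operator traversed, so by at most $(c\vee 2)^{s}$ per right-hand side (where $s$ bounds the right-hand-side size), and the BF loop, being $n$ rounds of substitution, compounds this to at most $(c\vee 2)^{sn}$, applied to an initial coefficient magnitude of at most $a$. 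Since ${\sf next}$ only compares already-computed values and re-selects, it does not enlarge coefficients further. The crucial point to argue carefully is that $n$ rounds of round-robin suffice — i.e.\ that the depth of nested substitution one must account for is $n$, not something larger — which is exactly the standard Bellman–Ford invariant underpinning the non-parametric algorithm, and it transfers verbatim to each parameter setting by the simulation established in the first theorem.

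**Main obstacle.** The routine parts are the closure and coefficient-growth bookkeeping. The step that needs the most care is justifying that the $;$-operator contributes nothing to fragmentation \emph{during} BF (because in that phase it never yields $\neginfty$, so it acts as the identity on the relevant region) while it \emph{does} contribute its factor of $2$ during the ${\sf next}$ phase — getting this split right is what makes the exponent $m_\vee + (n+1)m_\wedge + m_;$ rather than something looser — together with the bookkeeping that the $d$ improvement steps compound multiplicatively rather than additively over a fixed common partition.
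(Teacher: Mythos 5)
Your proposal matches the paper's own argument essentially step for step: the paper likewise derives part (1) by compounding the per-iteration fragmentation growth $2^{m_\vee+(n+1)m_\wedge+m_;}$ (with the ``$;$''-operators discounted during the BF phase because they never yield $\neginfty$ there, and the $\vee$- and $;$-splits charged to the ${\sf next}$ phase) over the at most $d$ improvement steps starting from fragmentation $1$, and it justifies part (2) by the observation that each parametric value is determined by $n$ rounds of round-robin evaluation. Your coefficient-tracking for part (2) is in fact more explicit than the paper's one-sentence remark following the theorem, but it is the same argument.
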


\noindent
In the second part of Theorem \ref{t:fragmentation},
we provided bounds for the coefficients occurring in affine functions of parametric
values. These bounds follow since each parametric value is determined by means of $n$ round of
round robin iteration. Consequently the sizes of the numbers occurring in parametric values 
are always polynomial in the input size of PSI. Since each inequality used for refining the current
partition of the parameter space is obtained from the comparison of two affine functions, 
we conclude that the sizes of coefficients of all occurring inequalities also remain
polynomial.

% Perhaps enhance this theorem also to speak about the coefficients in the occurring affine functions?
% DONE. Helmut

\section{Region Trees}\label{s:region}

The key issue for a practical implementation of PSI 
is to provide an efficient data-structure for partitions of the parameter space $\ZZ^k$ into convex components.
In case $k=1$, i.e., when the system depends on a single parameter only,
the partition $\Psi$ consists of a set of non-empty intervals
$[\neginfty,z_0],[z_0+1,z_1]\ldots, [z_{r-1}+1,z_r],[z_r+1,\infty]$ 
 whose union equals $\ZZ$.
Thus, it can be represented by a finite ordered list $[z_0;\ldots; z_r]$ 
(see Fig.\ \ref{f:list}).
\begin{figure}
\centering
\scalebox{0.5}{\input{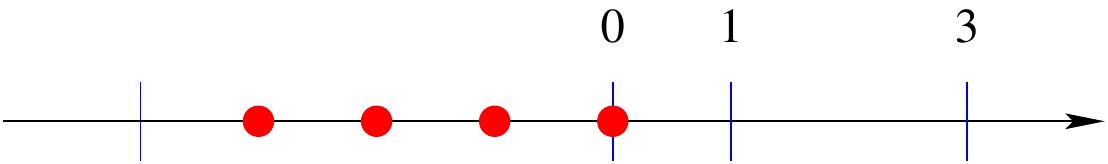_t}}

\caption{\label{f:list}The partition $[-4,0,1,3]$ where the elements of the second region are displayed.}
\end{figure}
By means of the list representation, all required operations on parametric values as well as on 
parametric strategies can be realized in polynomial time.

The case $k>1$ is less obvious. 
We use a representation based on 
satisfiable conjunctions of linear inequalities on parameters $a_1{\bf p}_1+\ldots+a_k{\bf p}_k\leq b$ with $a_1,\ldots,a_k,b\in\ZZ$.
Note that the negation of this inequality is given by the inequality 
$-a_1{\bf p}_1-\ldots -a_k{\bf p}_k\leq - b - 1$.
%
% Moreover, should $\neg c$ be smaller than $c$ we label the node with $\neg c$ instead of $c$.
% , i.e. an inequality in a label is always smaller than its negation.
% \todo{Thomas: Ein bisschen schwer zu verstehen. Andersherum ausgedrückt heißt das ja, dass man von $c$ und $\neg c$ 
% die kleinere Ungleichung benutzt.}
% \todo{Helmut: das ist so gemeint ja! Wie k\"onnte mans besser sagen?}
% \todo{Martin: So vieleicht?}
Disjunctions of satisfiable conjunctions of inequalities are organized into a binary tree $t$
as shown, e.g., in Fig.\ \ref{f:tree} on top. Each node in the tree is labeled with
an inequality $c$.
The left child of a node labeled with $c$ corresponds to the case where $c$ holds 
while the right child corresponds to the case where $\neg c$ holds. 
A leaf $v$ of the tree $t$ thus represents the \emph{conjunction} of the inequalities as provided by the
path reaching $v$. 
The path $(c_1,j_1)\ldots(c_r,j_r)$ in $t$ which successively visits the nodes labeled by 
$c_1,\ldots,c_r$ and continues with the $j_1,\ldots,j_r$th successors, respectively,
(where $j_i\in\{1,2\}$), represents the conjunction 
$c_1^{j_1}\wedge\ldots\wedge c_r^{j_r}$ where 
% we assume that 
$c^1 = c$ and $c^2 = \neg c$.
As an invariant of $t$, 
we maintain that all conjunctions corresponding to paths in $t$ are
satisfiable.
The leaves of $t$ are annotated with the values attained
in the corresponding parameter region. 
%
% A similar data-structure has been used by Feautrier in his implementation of parametric integer linear programming
% \cite{feautrier88,pip} and also by Reineke and Doerfert in \cite{Reineke14} for representing piecewise affine functions.
%
In order to obtain a more canonical representation, we additionally impose 
a strict linear ordering $\prec$ on the inequalities 
(analogous to the linear ordering of variables for OBDDs)
where the inequality $c$ at a node in $t$ should be less than all successor inequalities. 
Moreover, we demand that $c\prec \neg c$ should hold.
We call the corresponding data-structure \emph{region tree}. 
\begin{example}\label{e:tree}\rm
Consider the following set of linear inequalities:
\begin{align*}
-2{\bf p}_1 + {\bf p}_2	&\leq -2	&
-{\bf p}_1 - {\bf p}_2	&\leq -6	&
-{\bf p}_1 - {\bf p}_2	&\leq -2	
\end{align*}

\noindent
Assume further that the inequalities are ordered from left to right and 
that each of them is smaller than its negation.
% always smaller than their negations.
Then we obtain a region tree as depicted in Fig.\ \ref{f:tree} at the top, where the integer points corresponding
to the second leaf are shown at the bottom.
The tree is not a full binary tree, since the second inequality $-{\bf p}_1 - {\bf p}_2  \leq -6$
implies the third inequality $-{\bf p}_1 - {\bf p}_2  \leq -2$.
\begin{figure}
\centering
  \raisebox{1cm}{\scalebox{0.6}{\input{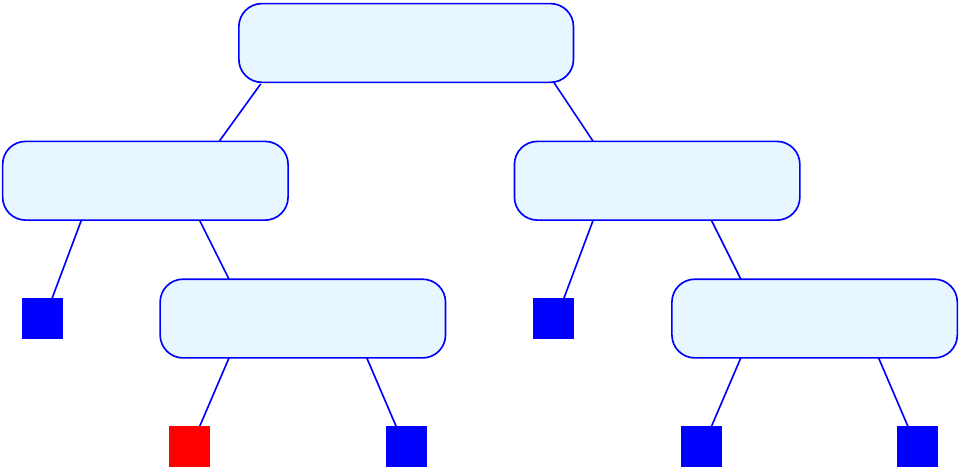_t}}}
	\hspace*{1cm}
  \scalebox{0.7}{\input{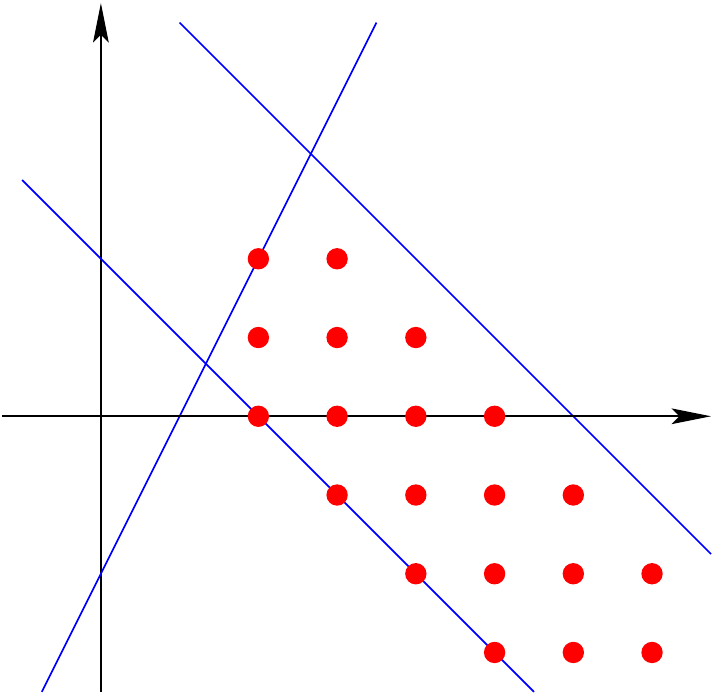_t}}
\caption{\label{f:tree}The region tree for the inequalities of Example \ref{e:tree} together with
the region represented by the second leaf.}
\end{figure}
\qed
\end{example}

\noindent
A trivial upper bound for the number of leaves of a region tree over a set of $n$ linear inequalities 
is given by $2^n$.
However, 
in our application we assume that the parameter space is of fixed dimensionality $k$.
Therefore many occurring inequalities are at least partially redundant.
For the important case where $k \leq n$, we can establish the following 
more precise upper bound on the number of leaves:
% Since each leaf of a region tree corresponds to a non-empty convex subset of the $k$-dimensional parameter space,
% and every linear inequality corresponds to a hyperplane dividing the parameter space we have:

\begin{lemma}\label{l:count}
The number of leaves of a region tree over $n$ linear inequalities 
over $k$ parameters with $k \leq n$ is bounded by
$\sum_{i=0}^k \binom n i$.
\qed
\end{lemma}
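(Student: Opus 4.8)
The plan is to bound the number of leaves by the number of distinct cells (full-dimensional regions, but also lower-dimensional ones collapsed into adjacent cells — in fact here we only care about the non-empty integer regions) induced by an arrangement of $n$ hyperplanes in $k$-dimensional space, and then invoke the classical Zaslavsky-type counting bound. Concretely, each leaf of a region tree over the inequality set $C = \{c_1,\ldots,c_n\}$ corresponds to a satisfiable conjunction $c_1^{j_1}\wedge\ldots\wedge c_n^{j_n}$ with $j_i\in\{1,2\}$, i.e.\ a choice, for each inequality $c_i$, of which side of the associated hyperplane $H_i$ the region lies on. Distinct leaves give conjunctions that are pairwise incompatible (since the tree is a partition), hence they determine pairwise disjoint, non-empty subsets of $\ZZ^k$, and each such subset lies in a single cell of the hyperplane arrangement $\{H_1,\ldots,H_n\}$ in $\RR^k$. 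Therefore the number of leaves is at most the number of regions into which $n$ hyperplanes partition $\RR^k$.

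The next step is to recall (or reprove by a short induction on $n$) the standard fact that $n$ hyperplanes in $\RR^k$ cut it into at most $\sum_{i=0}^{k}\binom{n}{i}$ regions. The induction is on $n$: adding the $n$-th hyperplane $H_n$ to an arrangement of $n-1$ hyperplanes increases the region count by the number of regions that $H_n$ is chopped into by its intersections with $H_1,\ldots,H_{n-1}$; those intersections form an arrangement of at most $n-1$ hyperplanes \emph{inside} $H_n\cong\RR^{k-1}$, so by the induction hypothesis (now on dimension $k-1$) this is at most $\sum_{i=0}^{k-1}\binom{n-1}{i}$. Combining with the $n-1$ case and Pascal's identity $\binom{n-1}{i}+\binom{n-1}{i-1}=\binom{n}{i}$ yields $\sum_{i=0}^{k}\binom{n-1}{i}+\sum_{i=0}^{k-1}\binom{n-1}{i}=\sum_{i=0}^{k}\binom{n}{i}$, which closes the double induction (the base cases $n=0$ or $k=0$ being immediate). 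This gives exactly the claimed bound.

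One subtlety I would address explicitly: the lemma is stated for the \emph{integer} points $\ZZ^k$ and for satisfiable conjunctions of inequalities (not equalities), so every leaf region is a full-dimensional polyhedron intersected with $\ZZ^k$ — there is no issue of many leaves all collapsing onto a single lower-dimensional flat, and conversely the integer restriction can only decrease the count relative to the real arrangement. Thus the real-arrangement bound is a legitimate (and not even tight in general) upper bound for the integer case, and the hypothesis $k\le n$ is used only to ensure the binomial sum is written in the stated form $\sum_{i=0}^k\binom ni$ rather than trivially $2^n$ (for $k\ge n$ the sum would exceed $2^n$ and the bound is uninteresting). The main obstacle is really just presenting the arrangement-counting induction cleanly — it is routine, but one must be careful that the induction runs simultaneously on $n$ and $k$ and that the "new hyperplane is subdivided by at most $n-1$ lower-dimensional hyperplanes" step is justified; everything else is bookkeeping.
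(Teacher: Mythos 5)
Your overall route --- inject the leaves into the full-dimensional cells of a hyperplane arrangement and then bound the number of cells by the standard double induction on $n$ and $k$ with Pascal's identity --- is essentially the idea the paper itself gestures at (it gives no proof, only the remark that region trees satisfy the same recurrence as cells of an arrangement of $n$ hyperplanes), and your induction is fine. The genuine gap is in the reduction step. You assert that every leaf region is a full-dimensional polyhedron and hence lies inside a single cell of the arrangement of the boundary hyperplanes $H_i : a_i^{T}\mathbf p = b_i$. Neither assertion holds: because the negation of $a^{T}\mathbf p\le b$ is $a^{T}\mathbf p\ge b+1$, a satisfiable conjunction of the given inequalities and their negations can be lower-dimensional. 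Concretely, for $k=1$ and the two inequalities $\mathbf p_1\le 0$ and $-\mathbf p_1\le 0$ there are three non-empty leaves, $\{\mathbf p_1\le -1\}$, $\{\mathbf p_1=0\}$ and $\{\mathbf p_1\ge 1\}$, while the arrangement of the (coinciding) boundary hyperplanes has only two full-dimensional cells; the middle leaf lies on the hyperplane and is assigned to no cell, so your inequality ``number of leaves $\le$ number of cells of this arrangement'' fails here, even though the final bound $\binom{2}{0}+\binom{2}{1}=3$ is still correct.

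The repair is small but necessary. Separate the two sides of $c_i$ by the \emph{shifted} hyperplane $H_i' : a_i^{T}\mathbf p = b_i+\tfrac12$. Since all coefficients are integral, no integer point lies on any $H_i'$, every integer point is strictly on one side of each $H_i'$, and integer points drawn from two distinct leaves are strictly separated by some $H_i'$. As the open cells of the arrangement $\{H_1',\dots,H_n'\}$ are precisely the non-empty intersections of open half-spaces determined by a sign vector, each leaf (which contains an integer point by satisfiability over $\ZZ$) maps to a distinct cell, and your counting induction then applies verbatim. Alternatively, one can run the same recurrence directly on the number of satisfiable sign vectors, which is what the paper's one-line justification amounts to.
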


\noindent
A similar bound has been inferred for cells of maximal dimension $k$ in arrangements of $n$ hyperplanes
(see, e.g., \cite{HalperinCH24}). The intersection of halfspaces as required for our estimation,
results in an identical recurrence relation and therefore in an identical solution.
%For our purpose it would even suffice to extend the sum by $\binom n {k+1}$ 
%allowing each $k+1$ hyperplanes to form a closed region and any (lower) number of
%hyperplanes to form unbounded regions which is generally \emph{not} possible.
%\todo{Thomas: Ich hatte das mit der space devision immer so verstanden, dass da auch unbounded regions erlaubt sind.
%Ich dachte, dass das im Lemma das richtige wäre.}
%\todo{martin: wir können das argument auch grne wieder streichen. 
%  ich habe ja jetzt eine brauchbare referenz für das Lemma gefunden}
%
%
% proof for k+1 should be easy: at most each (k+1)-subset defines a region!
%
As a consequence of Lemma \ref{l:count}, the number of leaves and thus also the number of nodes of a region tree for 
a fixed set of parameters is \emph{polynomial} in the number of involved linear inequalities.
%
% A data-structure similar to region trees (but for rational regions) has been considered by Reineke and Doerfert for
% parametric WCET analysis \cite{Reineke14}.
%
% Region trees correspond to the \emph{linear decision diagrams} (LDDs) of
% \citet{DBLP:conf/fmcad/ChakiGS09} where sharing of isomorphic sub-diagrams is omitted.

When maintaining region trees, we repeatedly must verify whether a (growing) conjunction
of linear inequalities is satisfiable (over $\ZZ$). Several algorithms have been proposed to solve this
problem (see, e.g., \cite{DBLP:conf/sc/Pught91,dillig09}). If the number of parameters is fixed and small,
% as in the applications we envision,
polynomial run-time can be guaranteed, e.g., by relying on the LLL algorithm for 
lattices in combination with the ellipsoid method \cite{lll,khachiyan1980polynomial} or by means
of generating functions \cite{Loera_threeinteger}.
Note that for small numbers of variables, even Fourier-Motzkin elimination (though only complete for 
rational satisfiability) is polynomial. 
%
% In our prototypical implementation, we rely on Fourier-Motzkin enhanced with branching and
% bounding in order to achieve completeness for the integer case. 
%
% Furthermore, we tabulate the results of previous satisfiability queries
% in order to avoid repeated complicated proofs.
%
% Interestingly, the extra overhead beyond Fourier-Motzkin was not
% very large (increase in run-time between 10 and 20\%) --- at least for the example problems we tried.

\section{Implementing Parametric Strategy Iteration}\label{s:implementation}

The efficiency of the resulting algorithm  crucially depends on the fragmentation of parametric variable assignments and
strategies occurring during iteration. 
Instead of globally maintaining a single common partition,
we allow \emph{individual} partitions for each intermediately computed value
as well as for each variable from $\bf X$.
Before applying a binary operator to parametric argument values $t_1,t_2$,
first a common refinement of the partitions of $t_1,t_2$ is computed by means of
a function ``${\sf align}$''. 
Given that 
% ${\sf aff}$ is the type of affine combinations of parameters, 
% ${\sf choice}$ is the set of possible strategic choices for each variable, and 
${'}a\;{\sf tree}$ is the type of region trees whose leaves are labeled with values of type ${'}a$,
the function ``${\sf align}$'' has the following type
\[
\begin{array}{lll}
{\sf align}	&:&	{'}a\;{\sf tree}\to {'}b\;{\sf tree}\to({'}a*{'}b)\;{\sf tree}
\end{array}
\]
In case of addition,
the operator ``$+$'' is applied for each component separately. 
In case of minimum,
the components of the common refinement may be further split into halves 
in order to represent the result as piecewise affine function.
Here, an extra function ``${\sf normalize}$'' is required which re-establishes
the ordering on the inequalities in the tree.
%
% nicht so schön

Since the number of nodes of a region tree is polynomial in the number $n$ of inequalities, 
and each required subsumption test is polynomial in $n$ and the maximal size of an occurring number,
we obtain:

\begin{lemma}\label{l:pol-1}
Assume that $C$ is a set of $n$ linear inequalities over 
a fixed finite set of parameters where the sizes of all occurring numbers are bounded by $m$.
%${\bf P}$. 
Then the operations ``${\sf align}$'' as well as addition, scalar multiplication and minimum
lifted to region trees with inequalities from $C$ are polynomial-time in the numbers $n$ and $m$.
% size, not number since coefficients may affect running time
% besides: we use FM throughout and only resort to ILP in the very end?
\qed
\end{lemma}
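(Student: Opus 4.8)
The plan is to bound, in turn, (i) the size of any region tree over inequalities from $C$, (ii) the cost of each individual satisfiability/subsumption check used while building or maintaining such a tree, and (iii) the number of such checks performed by each of the four operations. First I would invoke Lemma \ref{l:count}: since the dimension $k$ of the parameter space is fixed and $k\le n$, any region tree over $C$ has at most $\sum_{i=0}^k\binom{n}{i}=O(n^k)$ leaves, hence $O(n^k)$ nodes, which is polynomial in $n$ for fixed $k$. Along every root-to-leaf path, each node carries one inequality from $C$ (or its negation), whose coefficients have size bounded by $m$; by the remark preceding Section \ref{s:implementation} on polynomial-time integer feasibility for a fixed number of variables (LLL plus the ellipsoid method, or generating functions), every satisfiability test on such a conjunction — and hence every subsumption test, which is a satisfiability test on the conjunction together with the negation of the candidate consequence — runs in time polynomial in $n$ and $m$.

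Next I would treat the four operations one at a time. For \textbf{scalar multiplication} $c\cdot t$, the tree shape is unchanged: we only multiply each leaf's affine function by $c$, so the cost is $O(n^k)$ leaf updates, each on numbers of size $O(m+\log c)$, hence polynomial. For \textbf{addition} $t_1+t_2$, we first call ``${\sf align}$'' to build a common refinement and then add affine functions componentwise; so addition's cost is dominated by that of ``${\sf align}$''. For ``${\sf align}$'' itself, the common refinement is obtained by inserting the inequalities of one tree into the other (maintaining the ordering $\prec$ and pruning paths whose conjunction is unsatisfiable); the resulting tree is again a region tree over the combined inequality set, which still comes from a fixed parameter space, so by Lemma \ref{l:count} it has $O((2n)^k)$ leaves, and building it requires $O((2n)^k)$ satisfiability tests, each polynomial by the above. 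Finally, for \textbf{minimum} $t_1\wedge t_2$, we first ``${\sf align}$'', then on each component compare the two affine functions: this introduces one new linear inequality per component (the comparison $f_1\le f_2$), splitting that region into at most two, and a ``${\sf normalize}$'' pass re-sorts the inequalities along each path. The coefficients of the new comparison inequalities are differences of coefficients already present, hence of size $O(m)$; the post-split tree is still a region tree over a set of $O(n)$ inequalities over $k$ parameters, so by Lemma \ref{l:count} its size stays $O(n^k)$, and ``${\sf normalize}$'' performs $O(n^k)$ further satisfiability tests, each polynomial. Collecting these bounds gives that all four operations run in time polynomial in $n$ and $m$.

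The main obstacle is controlling the \emph{blow-up in the number of inequalities} introduced by ``${\sf align}$'' and by the splitting in the minimum case, together with the growth of coefficient sizes: a priori, iterating such operations could double the inequality set repeatedly. The key observation that saves us — and the point I would state carefully — is that at every stage the data structure is still a region tree over a set of linear inequalities living in the \emph{same fixed-dimensional} parameter space, so Lemma \ref{l:count} re-applies and caps the number of leaves at $O((\text{\#inequalities})^k)$ regardless of how the inequalities were generated; and the comparison inequalities produced by minimum are obtained by subtracting two affine functions already stored in the trees, so their numbers have size linear in $m$ rather than growing geometrically. Once these two facts are in hand, the polynomial-time bound for ``${\sf align}$'', addition, scalar multiplication, and minimum follows by multiplying the polynomial tree size by the polynomial per-test cost.
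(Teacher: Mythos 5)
Your proposal is correct and follows essentially the same route as the paper, which justifies the lemma only by the one-line observation that the number of nodes of a region tree is polynomial in $n$ (via Lemma~\ref{l:count}) and that each satisfiability/subsumption test is polynomial in $n$ and $m$ for a fixed number of parameters, taking care never to create nodes for empty regions. Your write-up simply fills in the per-operation accounting (and the coefficient-size bound for the comparison inequalities arising in the minimum case) that the paper leaves implicit.
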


\noindent
% Consider, e.g., an implementation of the operation ``${\sf align}$''. 
We build up the nodes of the resulting trees in pre-order.
In order to achieve the given complexity bound, we take care not to introduce nodes which correspond to 
unsatisfiable conjunctions of inequalities, i.e., empty regions. 
% Assume that $d$ equals the conjunction of the
% nicht so schön ende
Similar to parametric variable assignments, also parametric strategies are not determined as a whole. 
Instead, we maintain for 
each right-hand side $a\vee e_1\vee \ldots\vee e_r$ 
% for a variable 
a separate piecewise constant mapping from $\ZZ^k$ into the
range $[0,r]$ of natural numbers which identifies for each parameter setting, the subexpression which is currently selected.
The idea is that one variable ${\bf x}$ should not suffer from the fragmentation
required for another unrelated variable of the system.
Also for the operations ``${\sf next}$'' and ``${\sf select}$'', we obtain:

\begin{lemma}\label{l:pol-2}
Assume that $C$ is a finite set of linear inequalities over 
a fixed finite set of parameters.
% ${\bf P}$.
Then the operations ${\sf next}$ and ${\sf select}$ 
are polynomial-time in the number $n$ of variables and the size of $C$.
\qed
\end{lemma}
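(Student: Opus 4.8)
The plan is to treat \texttt{select} and \texttt{next} exactly as the text already indicates: both reduce to evaluating right-hand sides over region trees, so their cost is governed by Lemma \ref{l:pol-1} together with the polynomial bound on the number of nodes of a region tree (Lemma \ref{l:count}). First I would dispose of \texttt{select}. Given a parametric choice $\phi$, stored as a region tree with leaves in $[0,r]$, and parametric arguments $v_0,\ldots,v_r$, each itself a region tree, the operation $\mathsf{select}\,\phi\,(v_0,\ldots,v_r)$ is computed by repeatedly calling \texttt{align}: align $\phi$ with $v_0$, then with $v_1$, and so on, producing a common refinement, and then at each leaf $P$ reading off the value $v_{\phi(P)}$ restricted to $P$. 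Since each \texttt{align} is polynomial in the current number of inequalities and the size of the numbers (Lemma \ref{l:pol-1}), and the number of inequalities after aligning $r+1$ trees over a fixed set $C$ stays polynomial in $|C|$ (Lemma \ref{l:count}, using that $r$ and the number of parameters $k$ are fixed relative to the input), the whole computation is polynomial in $n$ and $|C|$. A subsequent call to \texttt{normalize} to re-establish the $\prec$-ordering is likewise polynomial, since it only reorders a polynomial number of nodes.

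Next I would handle \texttt{next}. By its defining properties 1) and 2), $\mathsf{next}(\sigma,\rho)$ must, for each equation ${\bf x}=e_0\vee\ldots\vee e_r$ and each parameter setting $\pi$, pick an index $i$ achieving the maximum of $\sem{e_i}\,\rho\,\pi$, keeping the old choice $\sigma\,{\bf x}\,\pi$ when it already attains the maximum. To realise this over region trees, evaluate each subexpression $e_i$ under the current parametric assignment $\rho$ — this is a composition of the lifted operators $+,\wedge,;$ and scalar multiplication, each polynomial by Lemma \ref{l:pol-1}, so $\sem{e_i}\,\rho$ is a region tree of polynomially many nodes. Then align the $r+1$ resulting trees together with the current strategy tree $\sigma\,{\bf x}$, and at each leaf perform a constant number of comparisons of affine functions to determine the winning index, breaking ties in favour of the incumbent to satisfy property 2). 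Each comparison introduces at most one fresh inequality (the difference of two affine functions), possibly splitting a leaf; since $r$ is fixed this adds only a constant number of inequalities per leaf, keeping the total polynomial, and a final \texttt{normalize} restores the ordering. Doing this for all $|\E|$ equations costs a further polynomial factor.

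The only point that needs care — and the one I expect to be the main obstacle — is bounding the \emph{growth} of the inequality set: naively one might fear that each evaluation and each tie-break balloons the number of inequalities, and iterating this could be exponential. The argument that saves us is that at every step the set of inequalities in play is a subset of a polynomially bounded pool: by Theorem \ref{t:fragmentation}(2) the coefficients of all affine functions occurring in the algorithm have size polynomial in the input, so each new inequality obtained by comparing two such affine functions again has polynomial size, and by Lemma \ref{l:count} the number of \emph{distinct} regions realisable over any fixed-dimensional parameter space with a given number of inequalities is polynomial. Hence although \texttt{next} and \texttt{select} may locally refine partitions, the size of $C$ — and therefore the size of every region tree they manipulate — stays polynomial, and each subsumption/satisfiability test used while building the trees is polynomial in $n$ and the size of $C$ (as noted in Section \ref{s:region}, e.g.\ via the LLL/ellipsoid combination or Fourier–Motzkin for fixed dimension). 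Combining these observations gives the claimed polynomial-time bound for both operations.
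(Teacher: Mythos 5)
Your first two paragraphs reconstruct essentially the argument the paper intends: the lemma is stated without an explicit proof and is meant to follow from Lemma~\ref{l:count} (polynomially many leaves for fixed parameter dimension) together with Lemma~\ref{l:pol-1} (polynomial-time \textsf{align} and lifted arithmetic), exactly as you do --- \textsf{select} by aligning the choice tree with the argument trees and reading off a leaf value, and \textsf{next} by evaluating the alternatives of each right-hand side, aligning, and comparing affine functions leaf-wise with ties broken in favour of the incumbent so that property~2) holds. One small inaccuracy: you invoke ``$r$ \ldots fixed relative to the input'', but $r$ need not be fixed; this is harmless, since all inequalities occurring in the $r+1$ trees are drawn from $C$, so the aligned tree still has at most $\sum_{i=0}^{k}\binom{|C|}{i}$ leaves and the $r+1$ successive calls to \textsf{align} contribute only a linear factor.

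Your third paragraph, however, claims more than is true: that ``the size of $C$ \ldots stays polynomial'' over the whole run. Lemma~\ref{l:count} bounds the number of regions \emph{given} the number of inequalities; it does not bound the number of distinct inequalities that can be generated by comparing affine functions with polynomially bounded coefficients --- there are exponentially many such inequalities in the bit size, so there is no polynomially bounded ``pool''. Indeed, Example~\ref{e:exp} exhibits a system whose least parametric solution has fragmentation exponential in $m$, so the set of encountered inequalities cannot stay polynomial in general; this is precisely why Theorem~\ref{t:pol} is parameterized by the maximal number of encountered inequalities rather than asserting an unconditional polynomial bound. Fortunately the lemma itself does not need this global claim: it asserts polynomiality in the size of the \emph{given} set $C$ for a single application of \textsf{next} or \textsf{select}, and for that your first two paragraphs suffice.
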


\noindent
Putting lemmas \ref{l:pol-1}, \ref{l:pol-2} together we conclude that
the running time of PSI is fast, whenever the number of occurring inequalities
is small and only few strategies are encountered.
%
% Argument concerning sizes of numbers still missing!
%
Summarizing, we obtain:

\begin{theorem}\label{t:pol}
Consider a system $\E$ of integer equations with $k$ parameters.
% and $n$ equations. 
The least parametric solution of $\E$ can be computed in time
polynomial in the bit size of $\E$, the maximal number of strategies encountered for any parameter setting,
and the maximal number of encountered inequalities.
\qed
% size of each ineq. is bounded by a polynomial in the size of $\cal E$!
\end{theorem}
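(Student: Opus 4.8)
The plan is to assemble Theorem~\ref{t:pol} directly from the three quantitative facts already established: Theorem~\ref{t:fragmentation}, Lemma~\ref{l:pol-1}, and Lemma~\ref{l:pol-2}. First I would fix notation: let $N$ denote the bit size of $\E$, let $D$ denote the maximal number of strategies encountered for any single parameter setting, and let $M$ denote the maximal number of distinct inequalities that ever occur in a region tree during the run of PSI. The goal is a running-time bound that is polynomial in $N$, $D$, and $M$ simultaneously, with $k$ treated as a fixed constant (so that Lemmas~\ref{l:pol-1} and~\ref{l:pol-2} apply).

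Next I would run through the outer structure of the algorithm in Fig.~\ref{f:int}. By Theorem~\ref{t:fragmentation} together with Theorem~1, the number of strategy-improvement iterations is at most $(r+1)^n$ --- but more usefully it is bounded by $D$, since each parameter setting sees at most $D$ strategies and the parametric run interleaves these; one uses here the observation (already made in the proof of Theorem~1) that the parametric algorithm projected to any $\pi$ behaves exactly like the non-parametric algorithm on $\E_\pi$, so no parameter setting can witness more than $D$ improvements and hence the parametric loop runs at most $D$ times. Within one iteration, the BF loop performs $\mathcal{O}(n\cdot|\E|)$ evaluations of parametric operators, and the call to $\sf next$ costs one more pass over all right-hand sides. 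Each such parametric operator evaluation --- align followed by $+$, scalar multiplication, or $\wedge$, plus normalize --- is, by Lemma~\ref{l:pol-1}, polynomial in the number of inequalities in the trees involved and in the bit size $m$ of the occurring numbers; and $\sf next$, $\sf select$ are polynomial in $n$ and the size of the inequality set by Lemma~\ref{l:pol-2}. The number of inequalities occurring at any point is $\le M$ by definition, and by the second part of Theorem~\ref{t:fragmentation} the bit size $m$ of every occurring number is polynomial in $N$ (bounded by $(c\vee 2)^{sn}\cdot a$, whose \emph{bit} length is polynomial in $N$). So each parametric operator application costs $\mathrm{poly}(N,M)$ time.

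Multiplying through: at most $D$ iterations, each doing $\mathcal{O}(n\cdot|\E|)$ operator evaluations, each of cost $\mathrm{poly}(N,M)$, plus the $\sf next$/$\sf select$ bookkeeping of cost $\mathrm{poly}(n,M)$ per iteration. Since $n$ and $|\E|$ are themselves bounded by $N$, the total is $D\cdot\mathrm{poly}(N,M) = \mathrm{poly}(N,D,M)$, which is the claimed bound. I would also remark that correctness of the output is already guaranteed by Theorem~1(2), so nothing further is needed on that front --- Theorem~\ref{t:pol} is purely a complexity statement.

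The main obstacle, and the point I would be most careful about, is making sure the quantity $M$ (the maximal number of encountered inequalities) genuinely controls everything: Lemma~\ref{l:count} bounds the \emph{size of a region tree} in terms of the number of inequalities it is built over, but one must confirm that throughout a run no tree is ever built over more than $M$ inequalities, and that the satisfiability checks needed to keep trees normalized are polynomial once $k$ is fixed --- this is exactly what the paragraph after Lemma~\ref{l:count} asserts (via the LLL/ellipsoid combination or generating functions), so I would cite that. A secondary subtlety is confirming that the bit sizes of the \emph{coefficients of the inequalities} stay polynomial in $N$; the remark following Theorem~\ref{t:fragmentation} already handles this, since every inequality arises from comparing two affine functions whose coefficients are polynomially bounded. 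With those two points pinned down, the proof is just the bookkeeping multiplication sketched above.
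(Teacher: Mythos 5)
Your proposal is correct and follows essentially the same route as the paper, which presents Theorem~\ref{t:pol} as a direct corollary of Lemmas~\ref{l:pol-1} and~\ref{l:pol-2}, the projection argument from Theorem~1 bounding the number of improvement rounds by the maximal number of strategies per parameter setting, and the coefficient-size and inequality-size remarks following Theorem~\ref{t:fragmentation}. The bookkeeping multiplication you describe, together with the two subtleties you flag (number bit sizes staying polynomial, and polynomial-time integer satisfiability for fixed $k$), is exactly the paper's intended justification.
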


\noindent
Although in our experiments with interval analysis for the benchmark programs used in Section \ref{s:experiments}, 
the sets of involved inequalities stayed reasonably small,
this however need not always be the case.

\begin{example}\label{e:exp}	\rm
For each $m\geq 0$, consider the following system 
% of integer equations 
with a single parameter ${\bf p}$:
\begin{align*}
{\bf x}_i &= {\bf x}_{i+1} \vee -2^i + {\bf x}'_{i+1}	&
{\bf x}_m &= {\bf x} \vee -2^m + {\bf x}' &
{\bf x} &= {\bf p} \wedge -{\bf p}	   \\
{\bf x}'_i &= {\bf x}'_{i+1} \wedge 2^i + {\bf x}_{i+1}	&
{\bf x}'_m &= {\bf x}' \wedge 2^m + {\bf x}	&
{\bf x}' &= -{\bf p} \vee {\bf p}	
\end{align*}

\noindent
where $1\leq i<m$. Let $\rho^*$ be the least parametric solution.
Then we have:
\[
\rho^*\;{\bf x}_1\; p =
	\left\{
 	\begin{array}{rll}
 	-p - 2^m	&\quad\text{if}\;& p \leq -2^m -1	\\
 	0		&\quad\text{if}\;& -2^m\leq p\leq 2^m,\; p\;\text{even}	\\
 	-1		&\quad\text{if}\;& -2^m\leq p\leq 2^m,\; p\;\text{odd}	\\
 	p - 2^m		&\quad\text{if}\;& 2^m +1\leq p	\\
 	\end{array}\right.
\]
Thus, the fragmentation of the mapping $\rho^*$ necessarily grows exponentially with $m$.
\qed
\end{example}

\noindent
We conclude that
for large values $m$, any parametric analyzer will exhibit an exponential behavior on the 
system of equations from Example \ref{e:exp}. 
% since the output of the analysis has exponential size.

\section{Parametic Program Analysis and Experimental Evaluation}\label{s:experiments}

As indicated in the introduction,
interval analysis for integer variables can be compiled
into a finite system of integer equations. The set of unknowns
of this system are
of the forms ${\bf x}_u^-$ and $ {\bf x}_u^+$ where 
$u$ is a program point 
and 
$x$ is a program variable
of the program to be analyzed, and the superscripts $-,+$ indicate the \emph{negated} lower bounds and
the upper bounds of the respective intervals 
(see \cite{Gawlitza07PreciseFix,DBLP:conf/fm/GawlitzaS08} for the details of the transformation).
The least solution $\rho^*$ of the integer system then translates into the program
invariant which, for program point $u$ and variable $x$, asserts that all runtime values of $x$
are in the interval $[-\rho^*({\bf x}_u^-), \rho^*({\bf x}_u^+)]$.
Here, $[\infty,\neginfty]$ signifies the empty set of values (unreachability of 
% the corresponding program point 
$u$).

The transformation of programs into integer equations is readily extended to programs 
\emph{with parameters}. For the sake of the transformation, parameters are treated
as constants occurring in the program, thus resulting in a parametric system of equations
as introduced in Section \ref{s:basics}.
% more to say?
For the program of Example \ref{e:par}, e.g., we obtain:
% \begin{samepage}
   \begin{align*}
        % {\bf x}_0^- &=& \infty        \\
        {\bf x}_1^- &= {\bf p}_1 \vee {\bf x}_3^-        &
        {\bf x}_2^- &= ({\bf x}_1^+ +\infty);({\bf x}_1^-+{\bf p}_2 -1);({\bf x}_1^-\wedge\infty)      \\
        {\bf x}_3^- &= {\bf x}_2^- + (-1)           &
        {\bf x}_4^- &= ({\bf x}_1^+ +(-{\bf p}_2));({\bf x}_1^-+\infty);({\bf x}_1^- \wedge -{\bf p}_2)      \\
        % {\bf x}_0^+ &=& \infty        \\
        {\bf x}_1^+ &= {\bf p}_1 \vee {\bf x}_3^+        &
        {\bf x}_2^+ &= ({\bf x}_1^+ +\infty);({\bf x}_1^-+{\bf p}_2 -1);({\bf x}_1^+\wedge {\bf p}_2 -1)      \\
        {\bf x}_3^+ &= {\bf x}_2^+ +1           &
        {\bf x}_4^+ &= ({\bf x}_1^+ +(-{\bf p}_2));({\bf x}_1^-+\infty);({\bf x}_1^+ \wedge \infty)    
      \end{align*}
% \end{samepage}
The least parametric solution %for this system 
for the unknowns ${\bf x}_4^-$ and ${\bf x}_4^+$ (signifying
the bounds of the values of $x$ at program exit) is given by:
	\begin{align*}
\xi({\bf x}_4^-) &= {\it if}\; -{\bf p}_1 + {\bf p}_2\leq 0\;{\it then}\;-{\bf p}_1\; {\it else}\;-{\bf p}_2	\\
\xi({\bf x}_4^+) &= {\it if}\; -{\bf p}_1 + {\bf p}_2\leq 0\;{\it then}\;{\bf p}_1\; {\it else}\;{\bf p}_2
	\end{align*}
The resulting \emph{parametric invariant} for the program exit 
states that $x$ equals ${\bf p}_1$, if 
$-{\bf p}_1 + {\bf p}_2\leq 0$, and $x$ equals ${\bf p}_2$ otherwise.

We have provided prototypical implementations of parametric strategy iteration for
parametric 
%systems of 
integer equations, and based on these implementations, also for
parametric 
% systems of 
interval equations.
For convenience, the user may additionally specify 
a boolean combination of linear constraints as 
a global assumption on the parameter values of interest.
Thus, we may, e.g., specify that generally, 
\[
\begin{array}{l}
0\leq {\bf p}_1 \wedge {\bf p}_1 \leq {\bf p}_2 
\end{array}
\]
should hold. Then the analyzer takes 
% care of this assumption by considering 
only considers values less or equal to the tree
in Fig.\ \ref{f:top}.
\begin{figure}
\begin{center}
\scalebox{0.6}{\input{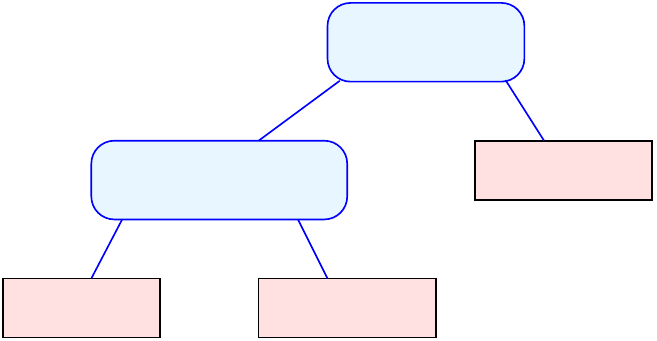_t}}
\caption{\label{f:top}The topmost value under the assumption
	$0\leq {\bf p}_1\leq{\bf p}_2$.}
\end{center}
\end{figure}

% more?
One implementation based on lists, deals with the one-parameter case only, while the other implementation, 
which is based on region trees, can deal with multiple parameters.
The total ordering $\prec$ used by our analyzer orders according to the number of variables,
where the lexicographical ordering on the vector of coefficients is used for constraints with the 
same number of variables.
For deciding satisfiability of conjunctions of inequalities, we generally rely on Fourier-Motzkin elimination
(with integer tightening).
Only in the very end, when it comes to produce the final result, we purge regions containing no integer points
by means of an integer solver.
%
% possibly enhanced with branching and
% bounding in order to achieve completeness for the integer case. 
%
% Interestingly, the extra overhead beyond Fourier-Motzkin was not
% very large (increase in run-time between 10 and 20\%) --- at least for the example problems we tried.
%
We have tried our implementations on the rate limiter example from \cite{monniaux09} as well as 
on several small (about 20 interval unknowns) but intricate systems of equations
in order to evaluate the impact of the number of parameters as well as 
the impact of the chosen method for checking emptiness of integer polyhedrons
on the practical performance.
% 
% The implementation, relied on Fourier-Motzkin-Elimination
% to check for empty regions and used region trees in all cases.
% Here, cut for creating an appendix!!
The tests have been executed on an Intel(R) Core(TM) i5-3427U CPU running Ubuntu.
On that machine, parametric interval analysis of the rate limiter example 
% (with just a single program point, but six parameters) 
terminated after less than 5s.
%
%
% Benchmark \texttt{example\_5\_7} is the exponential fragmentation example \ref{e:exp} with $m=7$.
%
% The rate limiter considered by \cite{monniaux09} consists of 6 variables and 6 parameters. 
% Their quantifier elimination technique analyses the program in 1.4s over the reals and in 
% 17s using floating point abstraction. \ref{fig:time}
% Using PSI we can analyze the program in 5.3s over the integers. The analysis iterates 
% through 4 strategies and the final results has 47 regions.
%
The remaining benchmarks are based on programs where interval analysis according to the
standard widening/narrowing approach fails to compute the least solution.
For each example equation system, we successively introduce parameters for
the constants used, e.g., in conditions and initializers.
%
% The number of parameters is always indicated by the last digit.
%
The system of equations \texttt{nested} is derived from a program with two independent nested loops.
The systems \texttt{amato}$j$ correspond to three example programs presented in
\cite{DBLP:conf/sas/AmatoS13}.
% The modifications are limited to replacing $i$ constants by parameters.
%
The system \texttt{rupak} corresponds to an example program by
Rupak Majumdar presented at MOD'11.
Both \texttt{amato}$2$ and \texttt{rupak} do not realize a plain interval analysis
but additionally track differences of variables.

Interestingly, the number of required strategy improvements does \emph{not} depend
on the number of parameters --- with the notable exception \texttt{amato0} where for three and four parameters,
the number increases from 8 to 9. Generally, the number of iterations is always significantly lower than the
number of unknowns in the system of equations.

\begin{figure*}[hbt]% legend to the side
% \vspace{-4ex}
\centering
\fbox{
% \begin{tikzpicture}
%   \begin{axis}[x tick label style={/pgf/number format/1000 sep=},
%%  	      width=0.95*\textwidth,
% 	      width=0.8*\textwidth,
% 	      height=0.3*\textwidth,
% 	      enlargelimits=0.15, 
%%  	      legend style={at={(0.5,-0.3)}, anchor=north, legend columns=-1},
% 	      legend style={anchor=east, at={(1.21,0.5)}}, 
%%  	      bar width=8pt,
% 	      bar width=7pt,
% 	      symbolic x coords={nested, amato0,amato1,amato2,rupak}, ybar,
% 	      xtick={nested, amato0,amato1,amato2,rupak},
% 	      xtick pos=left,
% 	      ytick pos=left,
% 	      ]
%    \addplot coordinates {(nested, 1) (amato0, 1) (amato2, 1) (amato1, 1) (rupak, 1) };
%   \addplot coordinates {(nested, 2) (amato0, 3) (amato2, 3) (amato1, 4) (rupak, 8) };
%  \addplot coordinates {(nested, 3) (amato0, 5) (amato2, 4) (amato1,13) (rupak,16) };
%   \addplot coordinates {(nested, 4) (amato0,19) (amato2, 6) (amato1,13) (rupak,13) };
%    \addplot coordinates {(nested,21) (amato0,26) (amato2,12) (amato1,14) (rupak,11) };
%    
%%     \legend {0 params\phantom{bla}, 1 params\phantom{bla}, 2 params\phantom{bla}, 3 params\phantom{bla}, 4 params};
%    \legend {0 params, 1 params, 2 params, 3 params, 4 params};
%  \end{axis}
%\end{tikzpicture}
\includegraphics{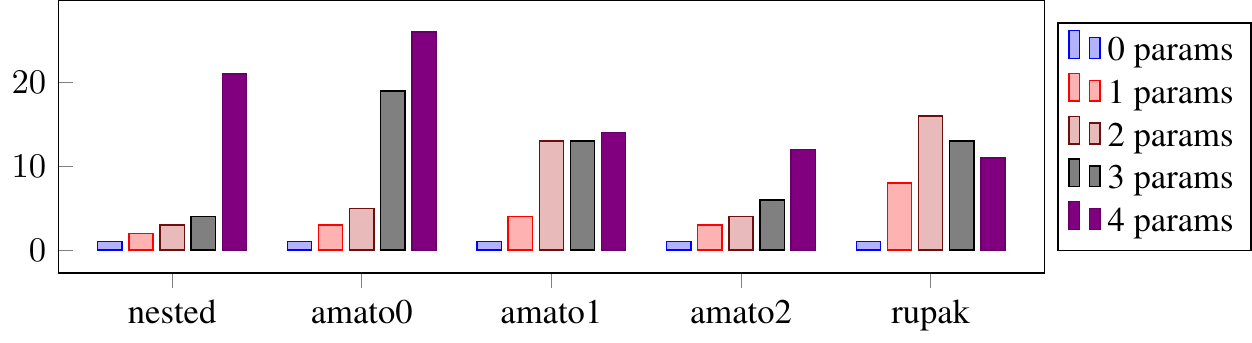}
\hspace{4pt}
}
% \vspace{-3ex}
\caption{Fragmentation.}
% \vspace{-4ex}
\label{fig:regions}
\end{figure*}

% \begin{figure} %legend below
% \vspace{-4ex}
% \begin{tikzpicture}
%   \begin{axis}[x tick label style={/pgf/number format/1000 sep=},
% 	      width=0.95*\textwidth,
% 	      height=0.3*\textwidth,
% 	      enlargelimits=0.15, 
% 	      legend style={at={(0.5,-0.3)}, anchor=north, legend columns=-1},
% 	      bar width=8pt,
% 	      symbolic x coords={nested, amato0,amato1,amato2,rupak}, ybar,
% 	      xtick={nested, amato0,amato1,amato2,rupak},
% 	      xtick pos=left,
% 	      ytick pos=left,
% 	      ]
%     \addplot coordinates {(nested, 1) (amato0, 1) (amato2, 1) (amato1, 1) (rupak, 1) };
%     \addplot coordinates {(nested, 2) (amato0, 3) (amato2, 3) (amato1, 4) (rupak, 8) };
%     \addplot coordinates {(nested, 3) (amato0, 5) (amato2, 4) (amato1,13) (rupak,16) };
%     \addplot coordinates {(nested, 4) (amato0,19) (amato2, 6) (amato1,13) (rupak,13) };
%     \addplot coordinates {(nested,21) (amato0,26) (amato2,12) (amato1,14) (rupak,11) };
%     
%     \legend {0 params\phantom{bla}, 1 params\phantom{bla}, 2 params\phantom{bla}, 3 params\phantom{bla}, 4 params};
%   \end{axis}
% \end{tikzpicture}
% \vspace{-3ex}
% \caption{Fragmentation.}
% \vspace{-4ex}
% \label{fig:regions}
% \end{figure}

\noindent
Figure~\ref{fig:regions} shows the number of regions in the results with different behavior. 
For the 0 parameter case this is always 1.
As expected, the fragmentation increases with the number of parameters --- but not as excessively
as we expected. % might have guessed.
% moderately and not exponentially as predicted by the worst case estimations.
%
In case of \texttt{rupak}, the number of regions even \emph{decreased} for three and four parameters.
The reason is that by introducing fresh parameters, also the ordering on inequalities changes.
The ordering on the other hand may have a significant impact onto fragmentation.

\begin{figure*}[hbt] %legend to the side
% \vspace{-4ex}
\centering
\fbox{
\includegraphics{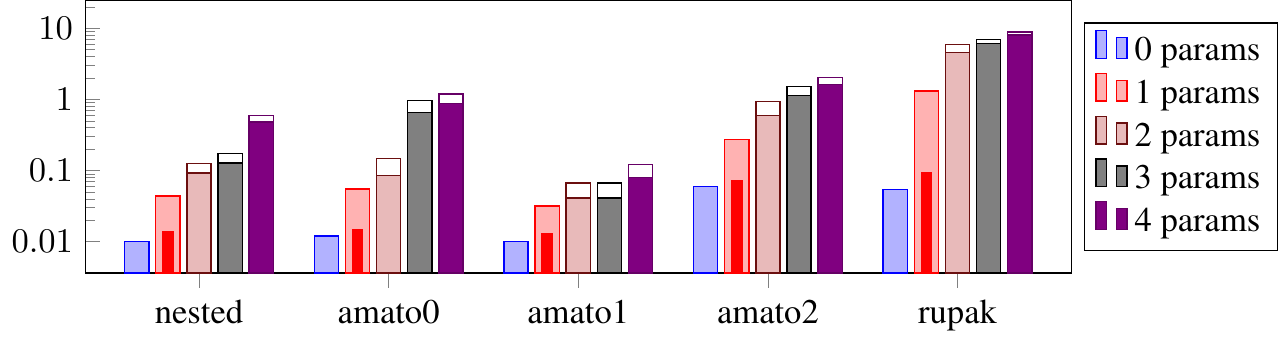}
}
% \vspace{-6ex}
\caption{Execution time in (s).}
% \vspace{-4ex}
\label{fig:time}
\end{figure*}

Figure~\ref{fig:time} shows the running times of the benchmarks on a logarithmic scale. 
We visualize the run-times for 0 through 4 parameters each. The filled and outlined bars correspond to
Fourier-Motzkin elimination and integer satisfiability for testing emptiness of regions, respectively.
The inscribed red bars in the single parameter case represent the run-time 
obtained by using linear lists instead of region trees.
The bottom-line case without parameters is fast and the 
dedicated implementation for single parameters increases the run-time only by a 
factor of approximately 1.4. 
Using region trees clearly incurs an extra penalty, which increases significantly
with the number of parameters. 
While replacing Fourier-Motzkin elimination for testing emptiness of regions 
with an enhanced algorithm for integer satisfiability increases the run-time 
only by an extra factor of about 1.5.
% which?
%
When considering absolute run-times, however, it turns out that the solver even for four parameters 
together with full integer satisfiability
is not prohibitively slow (a few seconds only for all benchmark equation systems).
Details on experimental results can be found at \url{www2.in.tum.de/~seidl/psi}.

\section{Related Work}\label{s:related}

Parametric analysis of linear numerical program properties has been advocated by Monniaux 
\cite{monniaux09,monniaux10} by compiling the abstract program semantics to real linear 
arithmetic and then use quantifier elimination to determine the parametric invariants.
We have conducted experiments with Monniaux' tool {\sc Mjollnir},
by which we tried to solve real relaxations of parametric integer equations.
Since {\sc Mjollnir} has no native support for positive or negative infinities 
% linear arithmetic in $\overline{\mathbb R}$ 
% which constitute simpler problems.
% (only $\mathbb R$ is available), 
these values had to be encoded through formulas with extra propositional variables.
% we interpreted the operators over $\ZzZ$ as operators over $\overline{\mathbb R}$ and 
This approach, however, did not scale to the sizes we needed.
Our conjecture is that the high Boolean complexity of the formulas causes severe problems.
Beyond that, fewer calls to quantifier elimination for formulas with many variables (as required
by {\sc Mjollnir}) may be more expensive than many calls to an integer solver for formulas
with few variables (namely, the parameters as in our approach).
% Also, instead of analyzing the program itself,
% we applied {\sc Mjollnir} to the parametric integer equations.
% {\sc Mjollnir} may perform better when applied to a formula that is directly obtained from a program.
% i.e., the formulas that we generated, may formulate the problem in a suboptimal way.
All in all, since our integer tool  and {\sc Mjollnir}
tackle slightly different problems, a precise comparison is difficult. 
Still, our experiments indicates that our
approach behaves better than a quantifier elimination-based approach for applications where 
the control-flow is complex with multiple control-flow points, but where few parameters are of interest.

% In principle, the same approach would also work with integer linear arithmetic. 
% Quantifier elimination over the integers, though, seems to be considerably 
% more expensive than quantifier elimination for rationals, which is already
% prohibitively slow for our examples.
Since long, relational program analyses, e.g., by means of polyhedra have been around
\cite{DBLP:conf/popl/CousotH78,DBLP:journals/scp/BagnaraHZ08} which also allow to infer
linear relationships between parameters and program variables. 
The resulting invariants, however, are convex and thus do not allow to differentiate
between different linear dependencies in different regions.
In order to obtain invariants as precise as ours, one would have to combine polyhedral domains
with some form of trace partitioning \cite{mauborgne05trace}.
These kinds of analysis, though, must rely on widening and narrowing to enforce termination,
whereas our algorithms avoid widening and narrowing completely and directly compute 
least solutions, i.e., the best possible parametric invariants.

Parametric analysis of a different kind has also been proposed by Reineke \cite{Reineke14}
in the context of worst-case execution time (WCET). 
They rely on parametric linear programming as implemented by the PIP tool \cite{pip}
and infer the dependence of the WCET on architecture parameters such as the chache size 
by means of black box sampling of the WCETs obtained for different parameter settings.

Our data structure of region trees is a refinement of the tree-like data-structure {\sc Quast}
provided by the PIP tool \cite{pip}. Similar data-structures are also used by Monniaux \cite{monniaux09} to represent
the resulting invariants, and by Mihaila et al.\ \cite{mihaila13widening} to differentiate
between different phases of a loop iteration.
In our implementation, we additionally enforce a total ordering on the 
constraints in the tree nodes and allow arbitrary values at the leaves. 
Total orderings on constraints have also been proposed for linear decision diagrams \cite{DBLP:conf/fmcad/ChakiGS09}.
Variants of LDDs later have been used to implement non-convex linear program invariants 
\cite{Gurfinkel10,Ghorbal12} and in \cite{Gurfinkel11} for representing linear arithmetic formulas when
solving predicate abstraction queries.
In our application, sharing of subtrees is not helpful, since
each node $v$ represents a \emph{satisfiable} conjunction of the inequalities which is constituted by the path
reaching $v$ from the root of the data-structure.
Moreover, our application requires that the leaves of the data-structure are not just annotated with
a Boolean value (as for LDDs), but with values from various sets, namely strategic choices, affine functions 
or even pairs thereof.

\section{Conclusion} \label{s:conc}

Solving systems of \emph{parametric integer equations}
allows to solve also systems of \emph{parametric interval equations},
and thus to realize parametric program analysis
% \emph{sensitivity} and \emph{mode analysis} 
for programs using integer variables.
% by inferring how invariants may depend 
% on the parameters of the program.
%
To solve parametric integer equations,
we have presented parametric strategy iteration. 
Instead of solving integer optimization and satisfiability problems
involving all unknowns of the problem formulation (as an approach
based on quantifier elimination),
our algorithm is a smooth generalization of ordinary
strategy iteration, which applies
integer satisfiability to inequalities involving 
parameters only.
% for integer equation systems  by means of region trees.
%
Our prototypical implementation indicates that this
approach indeed has the potential to deal with nontrivial problems ---
at least when only few parameters are involved.
Introducing further parameters significantly increases the analysis costs.
Surprisingly, the number of strategies required 
as well as the fragmentation observed in our examples increased only moderately.
% for increasing numbers of parameters.
Accordingly, the required running times were quite decent.

%%

% \paragraph{Future work}

More experiments are necessary, though, to obtain a deeper understanding of parametric strategy iteration.
Also, we are interested in exploring the practical potential of sensitivity and mode
analysis enabled by this new algorithm, e.g., for automotive and avionic
applications.

% something about the future?
%
% Mention only further directions which we either have already pursued
% and found hopeless or found promising and have the follow-up paper 
% ready for publication!
%

\smallskip

\noindent
\emph{Acknowledgement.}
We thank Stefan Barth (LMU) for Example \ref{e:exp}, and
Jan Reineke (Universit\"at des Saarlandes) for useful discussions.

%-------------------------------------------------------------------------

\bibliographystyle{plain}
\bibliography{martin}

\end{document}